\newcolumntype{V}{!{\vrule width 1pt}}
\newtheorem{theorem}{Theorem}[section]
\newtheorem{lemma}[theorem]{Lemma}
\newtheorem{corollary}[theorem]{Corollary}
\newtheorem{definition}[theorem]{Definition}
\newtheorem{example}[theorem]{Example}
\newtheorem{remark}[theorem]{Remark}
\newtheorem{problem}[theorem]{Problem}
\def\gcd{{\rm gcd}}
\def\diag{{\rm diag}}
\def\GL{{\rm GL}}
\def\M{{\mathbf{M}}}
\def\N{{\mathbf{N}}}
\def\W{{\mathbf{W}}}
\def\F{{\mathbf{F}}}
\def\U{{\mathbf{U}}}
\def\V{{\mathbf{V}}}
\def\P{{\mathbf{P}}}
\def\Q{{\mathbf{Q}}}
\def\B{{\mathbf{B}}}
\def\G{{\mathbf{G}}}
\def\S{{\mathbf{S}}}
\journal{Elsevier}
\begin{document}

\begin{frontmatter}

\title{Smith normal forms of bivariate polynomial matrices}

\author[sju]{Dong Lu}
\ead{donglu@swjtu.edu.cn}

\author[klmm,ucas]{Dingkang Wang}
\ead{dwang@mmrc.iss.ac.cn}

\author[hunu]{Fanghui Xiao}
\ead{xiaofanghui@hunnu.edu.cn}

\author[stu]{Xiaopeng Zheng\corref{cor}}
\ead{xiaopengzheng@stu.edu.cn}

\cortext[cor]{Corresponding author}

\address[sju]{School of Mathematics, Southwest Jiaotong University, Chengdu 610031, China}

\address[klmm]{State Key Laboratory of Mathematical Sciences, Academy of Mathematics and Systems Science, Chinese Academy of Sciences, Beijing 100190, China}

\address[ucas]{School of Mathematical Sciences, University of Chinese Academy of Sciences, Beijing 100049, China}

\address[hunu]{MOE-LCSM, School of Mathematics and Statistics, Hunan Normal University, Changsha 410081, China}

\address[stu]{College of Mathematics and Computer Science, Shantou University, Shantou 515821, China}

\begin{abstract}
 In 1978, Frost and Storey asserted that a bivariate polynomial matrix is equivalent to its Smith normal form if and only if the reduced minors of all orders generate the unit ideal. In this paper, we first demonstrate by constructing an example that for any given positive integer $s$ with $s \geq 2$, there exists a square bivariate polynomial matrix $\M$ with the degree of $\det(\M)$ in $y$ equal to $s$, for which the condition that reduced minors of all orders generate the unit ideal is not a sufficient condition for $\M$ to be equivalent to its Smith normal form. Subsequently, we prove that for any square bivariate polynomial matrix $\M$ where the degree of $\det(\M)$ in $y$ is at most $1$, Frost and Storey's assertion holds. Using the Quillen-Suslin theorem, we further extend our consideration of $\M$ to rank-deficient and non-square cases.
\end{abstract}

\begin{keyword}
 Bivariate polynomial matrix, Smith normal forms, Matrix equivalence, reduced minors, Quillen-Suslin theorem

 \vskip 6 pt

 \noindent MSC(2020): 68W30, 15A24, 13P10
\end{keyword}

\end{frontmatter}


\section{Introduction}

 The study of equivalence between polynomial matrices and their Smith normal forms has long been a foundational topic in both algebra and applications, bridging algebraic theory with practical engineering problems (see \cite{Bose1995Multi,Brown1993Matrices} and the references therein). The Smith normal form, a canonical diagonal form obtained via elementary row and column transformations, encapsulates essential invariants of a polynomial matrix, such as determinantal divisors and invariant factors, making it a powerful tool for analyzing matrix properties. For univariate polynomial matrices, the theory is well-established: since the ring of univariate polynomials over a field is a principal ideal domain (PID), every such matrix is equivalent to its Smith form \cite{Kailath1980Linear,Rosenbrock1970State}. However, for multivariate polynomial matrices (in two or more variables), the problem remains open due to the non-PID structure of multivariate polynomial rings, leading to rich and challenging research directions.

 Early investigations concentrated on bivariate polynomial matrices. In 1978, Frost and Storey \cite{Frost1978Equ} asserted the following: a necessary and sufficient condition for bivariate polynomial matrices to be equivalent to their Smith normal forms is that their reduced minors (see Definition \ref{def-reduced-minors}) of all orders generate the unit ideal. But later, they constructed a counterexample in \cite{Frost1981} to show that the above assertion does not hold. It has been proven that the condition that reduced minors of all orders generate the unit ideal is only a necessary condition for the equivalence of a multivariate polynomial matrix to its Smith normal form. Thus, researchers have begun to explore under what structural conditions of the matrix this condition serves as a necessary and sufficient condition.

 To date, researchers have focused primarily on two major classes of multivariate polynomial matrices, and substantial results have been achieved concerning the problem of their equivalence to Smith normal forms. Let $\M\in K[x_1,x_2,\ldots,x_n]^{l\times l}$ with $\det(\M) = (x_1 - f(x_2,\ldots,x_n))^t$, where $f\in K[x_2,\ldots,x_n]$ and $t$ is a positive integer. When $t=1$, Lin et al. \cite{Lin2006On} proved that $\M$ is equivalent to its Smith normal form. Subsequently, the main result obtained for $t=1$ were extended in \cite{Li2017OnT,Li2022TheS,Lu2023NewR,Liu2024The} to the case where $t \geq 2$ . In particular, Liu et al. \cite{Liu2024The} proved that a necessary and sufficient condition for $\M$ to be equivalent to its Smith normal form is that the reduced minors of all orders generate the unit ideal. In this case, the Smith normal form of $\M$ takes the general form, namely: $\diag\left((x_1 - f)^{t_1},\ldots,(x_1 - f)^{t_l}\right)$, where $t_1,\ldots,t_l$ are nonnegative integers and satisfy $t_1 \leq \cdots \leq t_l$ and $\sum_{i=1}^{l} t_i = t$. Thus, they completely solved the problem of equivalence between multivariate polynomial matrices of this type and their Smith normal forms. Let $\M\in K[x_1,x_2,\ldots,x_n]^{l\times l}$ with $\det(\M) = g(x_1)$, where $g\in K[x_1] \setminus \{0\}$. When $n=2$ and $g$ is an irreducible polynomial in $K[x_1]$, Li et al. \cite{LiD2019Some} proved that $\M$ is equivalent to its Smith normal form by constructing a homomorphism from $K[x_1,x_2]$ to $(K[x_1]/\langle g \rangle)[x_2]$. Subsequently, the cases where $g$ is a power of an irreducible polynomial in $K[x_1]$ or $n \geq 2$ were investigated in \cite{Zheng2023New,Lu2023Equiva,Guan2025NewR}. Lu et al. \cite{Lu2024OnThe} lifted the restrictions of ``irreducibility'' and ``$n=2$'', and using localization techniques \cite{Yengui2015Con} and the Quillen-Suslin theorem \cite{Quillen1976Projective,Suslin1976Projective}, proved that a necessary and sufficient condition for $\M$ to be equivalent to its Smith normal form is that reduced minors of all orders generate the unit ideal. This fully resolved the equivalence problem for such multivariate polynomial matrices and their Smith normal forms.

 While the equivalence problem between the two aforementioned classes of multivariate polynomial matrices and their Smith normal forms has been fully resolved, it remains an open problem for various other classes of multivariate polynomial matrices.

 This paper focuses primarily on the problem of equivalence between bivariate polynomial matrices and their Smith normal forms. Let $\M\in K[x,y]^{l\times l}$, where $\det(\M) \neq 0$. Since $\det(\M) \in K[x,y] \setminus \{0\}$, we may without loss of generality write $\det(\M)$ as:
 \begin{equation*}
  \det(\M) = a_0(x) + a_1(x) \cdot y + \cdots + a_s(x) \cdot y^s,
 \end{equation*}
 where $a_0,a_1,\ldots,a_s\in K[x]$ with $a_s \neq 0$, and $s$ is a nonnegative integer. If $s=0$, then the equivalence problem between $\M$ and its Smith normal form has been resolved in \cite{Lu2024OnThe}. Now, we construct an example to illustrate that for any positive integer $s$ with $s \geq 2$, there exists a bivariate polynomial matrix $\M$ such that reduced minors of all orders generate the unit ideal, while $\M$ is not equivalent to its Smith normal form.

 \begin{example}\label{intro-example}
  Let
  \[ \M = \begin{pmatrix} -1+x-y+y^2 & y \\  0 &  1+x-y^{s-2} \end{pmatrix}\]
  be a bivariate polynomial matrix in $\mathbb{Q}[x,y]^{2\times 2}$, where $s$ is a positive integer with $s \geq 2$, and $\mathbb{Q}$ is the field of rational numbers.

  By calculation, we have $\det(\M) = (-1+x-y+y^2)(1+x-y^{s-2})$. It easy to check that $1\times 1$ reduced minors of $\M$ generate the unit ideal $\mathbb{Q}[x,y]$. Since $\M$ is a square matrix, $2\times 2$ reduced minors of $\M$ generate $\mathbb{Q}[x,y]$. Next, we prove that $\M$ is not equivalent to its Smith normal form.

  Let $\mathbf{A} = \begin{pmatrix} 1+y-y^2 & -y \\  0 & -1+y^{s-2} \end{pmatrix}$. Then we have $\M = x\cdot \mathbf{I}_2 - \mathbf{A}$. We construct the following two matrices:
  \[\M_1 = \begin{pmatrix} -1+x-y+y^2 & 1 \\  0 & 1+x-y^{s-2} \end{pmatrix} ~ \text{and} ~
  \mathbf{A}_1 = \begin{pmatrix} 1+y-y^2 & -1 \\  0 & -1+y^{s-2} \end{pmatrix},\]
  then $\M_1 = x\cdot \mathbf{I}_2 - \mathbf{A}_1$. Let
  \[\U = \begin{pmatrix} 1 & 0 \\ -(1+x-y^{s-2}) & 1 \end{pmatrix} ~ \text{and} ~
  \V = \begin{pmatrix} 0 & -1 \\  1 & -1+x-y+y^2 \end{pmatrix}.\]
  Then $\det(\U) = \det(\V) = 1$. This implies that $\U,\V$ are unimodular matrices. Moreover, we have
  \[\U\M_1\V = \begin{pmatrix} 1 &   \\    & (-1+x-y+y^2)(1+x-y^{s-2}) \end{pmatrix}.\]
  It follows that $\M_1$ is equivalent to its Smith normal form. By the fact that $x\cdot \mathbf{I}_2 - \mathbf{A}$ and $x\cdot \mathbf{I}_2 - \mathbf{A}_1$ are equivalent over $K[x,y]$ if and only if $\mathbf{A}$ and $\mathbf{A}_1$ are similar over $K[y]$ \cite{Brown1993Matrices}. We assume that there exists a unimodular matrix $\mathbf{Q}=(q_{ij})_{2\times 2} \in K[y]^{2\times 2}$ such that $\mathbf{Q}^{-1}\mathbf{A}\mathbf{Q} = \mathbf{A}_1$. Then we get
  \[\begin{cases}
     q_{11}q_{22} - q_{21}q_{12} = 1, \\
     (1+y-y^2)q_{11} - yq_{21} = (1+y-y^2)q_{11}, \\
     (1+y-y^2)q_{12} - yq_{22} = -q_{11} + (-1+y^{s-2})q_{12},  \\
     (-1+y^{s-2})q_{21} = (1+y-y^2)q_{21}, \\
     (-1+y^{s-2})q_{22} = -q_{21} + (-1+y^{s-2})q_{22}.\end{cases}\]
  It is easy to check that $q_{21} = 0$, $q_{11}q_{22} = 1$, and $q_{11} - yq_{22} = (-2-y+y^2+y^{s-2})q_{12}$. It follows from $q_{11}q_{22} = 1$ that the degree of $q_{11} - yq_{22}$ in $y$ is $1$. However, the degree of $(-2-y+y^2+y^{s-2})q_{12}$ in $y$ is greater than $1$, which leads to a contradiction. Therefore, $\M$ is not equivalent to $\M_1$. This implies that $\M$ is not equivalent to its Smith normal form.
 \end{example}

 Example \ref{intro-example} reveals the fact that, in the general case, when the degree of $\det(\M)$ in $y$ is greater than $1$, the property that the reduced minors of all orders of $\M$ generate the unit ideal is not a sufficient condition for $\M$ to be equivalent to its Smith normal form. Based on the above fact, we shall consider the following problem.

 \begin{problem}\label{intro_problem}
  Let $\M\in K[x,y]^{l\times l}$ and $\det(\M) = f(x)y+g(x)$, where $f,g\in K[x]$ and $f \neq 0$. Is the necessary and sufficient condition for $\M$ to be equivalent to its Smith normal form that the reduced minors of all orders of $\M$ generate the unit ideal?
 \end{problem}

 The rest of the paper is organized as follows. In Section \ref{sec1}, we recall some terminology and preliminary results needed for this paper. In Section \ref{sec3}, we solve Problem \ref{intro_problem} and extend the square matrix case to rank-deficient and non-square matrix cases. Some concluding remarks are provided in Section \ref{sec5}.

\section{Preliminaries}\label{sec1}

 Let $K$ be a field, $K[x,y]$ be the bivariate polynomial ring in variables $x,y$ with coefficients in $K$, and $K[x,y]^{l\times m}$ be the set of $l\times m$ matrices with entries in $K[x,y]$, where $l,m$ are two positive integers. Throughout this paper, we assume without loss of generality that $l\leq m$.

 The arguments $(x,y)$ and $(x)$ are omitted whenever their omission does not cause confusion; for example, we denote $f(x,y)$ and $p(x)$ by $f$ and $p$ for simplicity, respectively. Let $f_1, \ldots, f_l\in K[x,y]$; we use $\langle f_1, \ldots, f_l \rangle$ and $\diag(f_1, \ldots, f_l)$ to represent the ideal in $K[x,y]$ generated by $f_1, \ldots, f_l$ and the diagonal matrix whose diagonal elements are $f_1, \ldots, f_l$, respectively.

 For convenience, we use bold letters to denote polynomial matrices. Let $\mathbf{M} \in K[x,y]^{l \times m}$. For $i=1,\ldots,l$, we use $I_i(\mathbf{M})$ and $d_i(\mathbf{M})$ to denote the ideal generated by all the $i \times i$ minors of $\mathbf{M}$ and the greatest common divisor of all the $i \times i$ minors of $\mathbf{M}$, respectively. Here, we make the convention that $d_0(\mathbf{M}) \equiv 1$.

 We first introduce the concept of the Smith normal form of bivariate polynomial matrices.

 \begin{definition}\label{def}
  Let $\mathbf{M} \in K[x,y]^{l \times m}$ with rank $r$, and $f_i$ be a polynomial in $K[x,y]$ defined as follows:
  \[f_i= \begin{cases}\frac{d_i(\mathbf{M})}{d_{i-1}(\mathbf{M})}, & 1 \leq i \leq r, \\ 0, & r<i \leq l,\end{cases}\]
  Then the Smith normal form of $\mathbf{M}$ is given by
  \[\begin{pmatrix}
   \diag(f_1, \ldots, f_r) & \mathbf{0}_{r \times(m-r)} \\
		\mathbf{0}_{(l-r) \times r} & \mathbf{0}_{(l-r) \times(m-r)}
  \end{pmatrix}.\]
 \end{definition}

 In Definition \ref{def}, Li et al. \cite{Li2025Smith} used localization and $p$-index techniques to prove that $f_1\mid f_2\mid \cdots \mid f_r$. This indicates that the above definition is a natural generalization of the concept of the Smith normal form of univariate polynomial matrices.

\begin{definition}\label{def-uni}
 Let $\mathbf{U} \in K[x,y]^{l \times l}$. Then $\mathbf{U}$ is said to be unimodular if $\det(\mathbf{U})$ is a unit in $K[x,y]$. The set of all $l\times l$ unimodular matrices over $K[x,y]$ is denoted by $\GL_l(K[x,y])$.
\end{definition}

 With the set of unimodular matrices defined, we can now specify what it means for two polynomial matrices to be equivalent.

\begin{definition}
 Let $\mathbf{M}, \mathbf{N} \in K[x,y]^{l \times m}$. We say that $\mathbf{M}$ is equivalent to $\mathbf{N}$ over $K[x,y]$ if there exist $\mathbf{U} \in \GL_l(K[x,y])$ and $\mathbf{V} \in \GL_m(K[x,y])$ such that $\mathbf{N} =\mathbf{U} \mathbf{M} \mathbf{V}$.
\end{definition}

 In what follows, we write $\mathbf{M} \sim \mathbf{N}$ to denote the equivalence of $\mathbf{M}$ and $\mathbf{N}$ over $K[x,y]$.

\begin{definition}[\cite{Lin1988}]\label{def-reduced-minors}
 Let $\mathbf{M} \in K[x,y]^{l \times m}$ with rank $r$, where $1 \leq r \leq l$. For any given integer $i$ with $1 \leq i \leq r$, let $a_1, \ldots, a_\beta$ be all the $i \times i$ minors of $\mathbf{M}$, where $\beta=\binom{l}{i}\binom{m}{i}$. Extracting $d_i(\mathbf{M})$ from $a_1, \ldots, a_\beta$ yields
 \[a_j=d_i(\mathbf{M}) \cdot b_j, ~~ j=1,\ldots,\beta.\]
 Then, $b_1, \ldots, b_\beta$ are called the $i \times i$ reduced minors of $\mathbf{M}$. For convenience, we use $J_i(\mathbf{M})$ to denote the ideal generated by $b_1, \ldots, b_\beta$. In addition, set $J_0(\mathbf{M}) \equiv K[x,y]$.
\end{definition}

 A crucial observation is that $d_i(\mathbf{M})$, $I_i(\M)$ and $J_i(\mathbf{M})$ are invariant under matrix equivalence, as shown in the following lemma, whose proof follows straightforwardly from the Cauchy-Binet formula \cite{Strang2010Linear}.

\begin{lemma}\label{lem0}
 Let $\M,\N \in K[x,y]^{l\times m}$. If $\M \sim \N$, then $d_i(\M) = d_i(\N)$, $I_i(\M) = I_i(\N)$ and $J_i(\M) = J_i(\N)$, where $i = 1,\ldots,l$.
\end{lemma}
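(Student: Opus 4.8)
The plan is to establish the three invariance claims ($d_i(\M)=d_i(\N)$, $I_i(\M)=I_i(\N)$, and $J_i(\M)=J_i(\N)$) by showing that the collection of $i\times i$ minors of $\N$ lies in the right span relative to the minors of $\M$, and symmetrically. Since $\M\sim\N$ means $\N=\U\M\V$ with $\U\in\GL_l$ and $\V\in\GL_m$, the natural tool is the Cauchy-Binet formula, exactly as the excerpt indicates. First I would recall that for a product of matrices, every $i\times i$ minor of $\N=\U\M\V$ can be expanded, via Cauchy-Binet applied twice (once for the product $\U\M$ and once for multiplying by $\V$), as a $K[x,y]$-linear combination of the $i\times i$ minors of $\M$. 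The coefficients in this combination are themselves products of $i\times i$ minors of $\U$ and of $\V$, hence polynomials in $K[x,y]$.

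From this single expansion the ideal equality $I_i(\N)\subseteq I_i(\M)$ is immediate, because each minor of $\N$ is a $K[x,y]$-combination of minors of $\M$. For the reverse inclusion I would use that $\U$ and $\V$ are unimodular: writing $\M=\U^{-1}\N\V^{-1}$ with $\U^{-1}\in\GL_l$ and $\V^{-1}\in\GL_m$ (these inverses have polynomial entries precisely because the determinants are units), the same Cauchy-Binet argument gives $I_i(\M)\subseteq I_i(\N)$. Hence $I_i(\M)=I_i(\N)$. Since $d_i$ is by definition the gcd of the generators of $I_i$, and the two ideals coincide, we get $d_i(\M)=d_i(\N)$ up to a unit; normalizing (e.g. making gcd's monic or simply noting gcd is defined up to associates) gives the stated equality.

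For the reduced-minor ideals $J_i$, I would factor each $i\times i$ minor of $\N$ as $d_i(\N)$ times a reduced minor, and likewise for $\M$. Using $d_i(\M)=d_i(\N)$, the Cauchy-Binet expansion of a minor of $\N$ as a combination of minors of $\M$ descends, after dividing through by the common factor $d_i$, to an expression of each reduced minor of $\N$ as a $K[x,y]$-combination of the reduced minors of $\M$. This yields $J_i(\N)\subseteq J_i(\M)$, and the symmetric argument with $\U^{-1},\V^{-1}$ gives the reverse inclusion, so $J_i(\M)=J_i(\N)$.

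The computation is essentially routine once the double Cauchy-Binet expansion is set up; the one place that requires a little care is the division step in the $J_i$ argument. One must check that dividing the linear relation among the raw minors by $d_i$ genuinely leaves a polynomial relation among the reduced minors, rather than merely a relation in the fraction field. This is clean here because $d_i$ divides every $i\times i$ minor of $\M$ by definition, so each coefficient-times-minor term is divisible by $d_i$, and the quotients are exactly the reduced minors; thus no denominators survive. I expect this bookkeeping—tracking that the common factor cancels uniformly across all terms—to be the only genuine (if minor) obstacle; everything else follows formally from the multiplicativity of minors under matrix multiplication and the invertibility of $\U$ and $\V$ over $K[x,y]$.
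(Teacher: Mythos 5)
Your proposal is correct and follows exactly the route the paper indicates: the lemma is justified there as a straightforward consequence of the Cauchy--Binet formula, which is precisely the double-expansion argument you carry out, including the careful cancellation of $d_i$ in the reduced-minor step. No discrepancies.
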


\begin{lemma}[\cite{Lu2024OnThe}]\label{lem3}
 Let $\M,\F,\G \in K[x,y]^{l\times l}$ satisfy $\M = \G\F$, and $\gcd(\det(\G),\det(\F))=1$. If $J_{i}(\M) = K[x,y]$, then $J_i(\G) = J_i(\F) = K[x,y]$, where $i = 1,\ldots,l$.
\end{lemma}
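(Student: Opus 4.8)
The plan is to combine the Cauchy--Binet formula with a localization argument at the height-one primes of $K[x,y]$, the crux being the multiplicativity of determinantal divisors $d_i(\M) = d_i(\G)\,d_i(\F)$ (up to a unit), which is precisely where the hypothesis $\gcd(\det(\G),\det(\F))=1$ enters. Fix $i \in \{1,\ldots,l\}$ throughout.

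First I would record the easy divisibility coming directly from Cauchy--Binet. Writing $\N[I,J]$ for the submatrix of a matrix $\N$ on rows $I$ and columns $J$ with $|I|=|J|=i$, the identity $\M=\G\F$ gives
\[
\det\M[I,J] = \sum_{|L|=i} \det\G[I,L]\,\det\F[L,J].
\]
Since $d_i(\G)$ divides every $\det\G[I,L]$ and $d_i(\F)$ divides every $\det\F[L,J]$, the product $d_i(\G)\,d_i(\F)$ divides every $i\times i$ minor of $\M$, hence divides their gcd, so $d_i(\G)\,d_i(\F)\mid d_i(\M)$.

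The main step, and the one I expect to be the principal obstacle, is the reverse divisibility, which is false without coprimality and which I would obtain by localizing. Since $K[x,y]$ is a regular ring, for each irreducible $p$ the localization $R_p:=K[x,y]_{\langle p\rangle}$ is a regular local ring of dimension one, hence a DVR with valuation $v_p$, and $v_p\big(d_i(\N)\big)=\min_{I,J} v_p\big(\det\N[I,J]\big)$ because $K[x,y]$ is a UFD. By coprimality, $p$ divides at most one of $\det(\G),\det(\F)$. If $p\nmid\det(\F)$, then $\F\in\GL_l(R_p)$, so $\M\sim\G$ over $R_p$, giving $v_p(d_i(\M))=v_p(d_i(\G))$ while $v_p(d_i(\F))=0$; the symmetric conclusion holds if $p\nmid\det(\G)$. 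In either case $v_p(d_i(\M))=v_p(d_i(\G))+v_p(d_i(\F))$, and since this holds for every irreducible $p$ we conclude $d_i(\M)=d_i(\G)\,d_i(\F)$ up to a unit.

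Finally I would transfer the hypothesis $J_i(\M)=K[x,y]$. Writing $\det\G[I,L]=d_i(\G)\,g_{I,L}$ and $\det\F[L,J]=d_i(\F)\,f_{L,J}$, where the $g_{I,L}$ and $f_{L,J}$ are the reduced minors of $\G$ and $\F$, the Cauchy--Binet identity together with the product formula for $d_i(\M)$ yields
\[
\frac{\det\M[I,J]}{d_i(\M)} = u^{-1}\sum_{|L|=i} g_{I,L}\,f_{L,J}
\]
for a unit $u$. Thus each reduced minor of $\M$ is a unit multiple of a $K[x,y]$-linear combination of the $g_{I,L}$, and also of the $f_{L,J}$, so it lies simultaneously in $J_i(\G)=\langle g_{I,L}\rangle$ and in $J_i(\F)=\langle f_{L,J}\rangle$. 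Hence $J_i(\M)\subseteq J_i(\G)$ and $J_i(\M)\subseteq J_i(\F)$; since $J_i(\M)=K[x,y]$, both $J_i(\G)$ and $J_i(\F)$ equal $K[x,y]$, which proves the claim for every $i$.
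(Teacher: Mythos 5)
Your argument is correct. Note, though, that the paper does not prove this lemma at all --- it is imported verbatim from the cited reference \cite{Lu2024OnThe} --- so there is no in-paper proof to compare against; your write-up is in effect a self-contained replacement for the external citation. The three steps all check out: Cauchy--Binet gives $d_i(\G)\,d_i(\F)\mid d_i(\M)$; the localization at height-one primes of the UFD $K[x,y]$ correctly exploits $\gcd(\det\G,\det\F)=1$ to force $v_p(d_i(\M))=v_p(d_i(\G))+v_p(d_i(\F))$ for every irreducible $p$, whence $d_i(\M)=d_i(\G)\,d_i(\F)$ up to a unit; and dividing the Cauchy--Binet identity by $d_i(\M)$ then places every reduced minor of $\M$ inside both $J_i(\G)$ and $J_i(\F)$, so $J_i(\M)=K[x,y]$ forces both to be the unit ideal. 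The only step you use silently is that $v_p(d_i(\F))=0$ when $p\nmid\det(\F)$; this needs $d_i(\F)\mid\det(\F)$, which follows from the standard chain $d_1\mid d_2\mid\cdots\mid d_l=\det$ obtained by cofactor expansion, and is worth one sentence. You could also remark that for a UFD the localization at a height-one prime is a DVR directly (the prime is principal), so the appeal to regularity is not needed.
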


 Building on the concept of reduced minors, the following lemma establishes a criterion for a square matrix to be equivalent to its Smith normal form, linking the ideal $J_i(\M)$ to the equivalence property.

\begin{lemma}[\cite{Lu2024OnThe}]\label{lem4}
 Let $\M \in K[x,y]^{l\times l}$ with $\det(\M) \in K[x]\setminus \{0\}$. Then $\M$ is equivalent to it Smith normal form if and only if $J_i(\M) = K[x,y]$ for $i = 1,\ldots,l$.
\end{lemma}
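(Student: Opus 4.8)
The plan is to prove the two implications separately, with essentially all the difficulty concentrated in sufficiency. For necessity, suppose $\M \sim \S$ with $\S = \diag(f_1,\ldots,f_l)$ its Smith normal form. By Lemma \ref{lem0} it is enough to compute $J_i(\S)$. The only nonzero $i\times i$ minors of a diagonal matrix are those selecting a common row and column index set $S$, and such a minor equals $\prod_{j\in S} f_j$. Since $f_1\mid\cdots\mid f_l$, the greatest common divisor of all $i\times i$ minors is $d_i(\S)=f_1\cdots f_i$, so the reduced minor attached to $S=\{1,\ldots,i\}$ is exactly $1$. Hence $J_i(\S)=K[x,y]$, and therefore $J_i(\M)=K[x,y]$ for every $i$.

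For sufficiency, assume $J_i(\M)=K[x,y]$ for all $i$. I would first record that every determinantal divisor lies in $K[x]$: since $d_l(\M)=\det(\M)\in K[x]\setminus\{0\}$ and $d_{i-1}(\M)\mid d_i(\M)$ in the UFD $K[x,y]$, each $d_i(\M)$ divides a polynomial in $x$ alone and hence, up to a unit, belongs to $K[x]$; consequently each invariant factor $f_i=d_i(\M)/d_{i-1}(\M)$ lies in $K[x]$ with $f_1\mid\cdots\mid f_l$, so $\S=\diag(f_1,\ldots,f_l)$ has all entries in $K[x]$. The next step is to reduce to the case where $\det(\M)$ is a power of a single irreducible. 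Factoring $\det(\M)=c\prod_{k} p_k(x)^{e_k}$ into distinct irreducibles, I would decompose the torsion cokernel $K[x,y]^l/\M\,K[x,y]^l$ into its $p_k$-primary components and lift this module decomposition to a factorization $\M\sim\G_1\cdots\G_m$ with pairwise coprime determinants $\det(\G_k)=p_k^{e_k}$. Lemma \ref{lem3} then transfers the reduced-minor hypothesis to each factor, i.e. $J_i(\G_k)=K[x,y]$, and since Smith forms combine across coprime determinants (the cokernels sit on disjoint supports), it suffices to establish the result for a single $\G_k$. Thus I may assume $\det(\M)=p(x)^e$ for one irreducible $p$.

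In the prime-power case I would analyze $\M$ modulo $p$, i.e. over the PID $\mathbb{F}[y]$ with $\mathbb{F}=K[x]/\langle p\rangle$ a field, combined with localization at the prime $\langle p\rangle$ of $K[x]$. Over this residue/local setting one computes a Smith-type reduction, and the hypothesis $J_i(\M)=K[x,y]$ supplies, at each order $i$, a family of $i\times i$ minors generating the unit ideal, hence a vector whose entries are coprime. The decisive step is to lift these reductions to honest unimodular equivalences over $K[x,y]$: the Quillen-Suslin theorem completes each such unimodular row to a matrix in $\GL_l(K[x,y])$, and applying this repeatedly strips off the invariant factors $f_1,\ldots,f_l$ until $\M$ is brought to $\diag(f_1,\ldots,f_l)=\S$. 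I expect the genuine obstacle to lie precisely in orchestrating this induction when $e\geq 2$: one must arrange the successive reductions so that, after an invariant factor is removed, the reduced-minor conditions are inherited by the smaller remaining matrix, keeping Quillen-Suslin applicable at every stage. This is exactly the point where the elementary reduction modulo an irreducible (which already suffices when $e=1$) must be upgraded to the full localization plus local-global patching argument.
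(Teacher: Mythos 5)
You should first be aware that the paper contains no proof of Lemma \ref{lem4}: it is imported as a black box from \cite{Lu2024OnThe}, whose argument (as the introduction indicates) runs through localization techniques and the Quillen--Suslin theorem. So there is nothing in-paper to compare against, and your proposal must stand on its own. Its first half does. Necessity is correct: $J_i$ is invariant under equivalence (Lemma \ref{lem0}), and for $\diag(f_1,\ldots,f_l)$ with $f_1\mid\cdots\mid f_l$ the reduced minor attached to the index set $\{1,\ldots,i\}$ equals $1$. Your opening reductions for sufficiency are also sound and consistent with the cited proof's architecture: every $d_i(\M)$ divides $\det(\M)\in K[x]\setminus\{0\}$ and hence lies in $K[x]$, and repeated primitive factorization (Lemma \ref{lemma-PFT}) together with Lemma \ref{lem3} reduces the problem to $\det(\M)=p(x)^e$ with $p$ irreducible --- although the step ``Smith forms combine across coprime determinants'' is asserted rather than proved; it needs the fact that square matrices with nonzero determinant are equivalent iff their cokernels are isomorphic, plus a Chinese-remainder decomposition of the cokernel.

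The genuine gap is the prime-power case, which is where the entire content of the lemma lives and which you explicitly decline to carry out. Reducing $\M$ modulo $p$ to the PID $(K[x]/\langle p\rangle)[y]$ and invoking Quillen--Suslin to complete a unimodular row handles $e=1$, but for $e\geq 2$ the decisive issues are exactly the ones you name and then defer: (i) why the hypothesis $J_i(\M)=K[x,y]$ produces, at each stage, a row or column that is unimodular over $K[x,y]$ (not merely nonzero modulo $p$), and (ii) why, after splitting off one invariant factor, the remaining $(l-1)\times(l-1)$ block inherits the condition that its reduced minors of all orders generate the unit ideal, so that the induction closes. Your text says the obstacle lies ``precisely in orchestrating this induction'' and that the mod-$p$ argument ``must be upgraded to the full localization plus local-global patching argument'' --- but that upgrade is the proof, and it is absent. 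As written, the proposal is a correct roadmap to the result of \cite{Lu2024OnThe}, not a proof of it.
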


 We now introduce the notion of zero left prime matrices, which play a role in understanding their relation to unimodular transformations and the decomposability of matrices.

\begin{definition}[\cite{Youla1979Notes}]
 Let $\mathbf{M} \in K[x,y]^{l \times m}$ be of full rank. Then $\mathbf{M}$ is said to be zero left prime (ZLP) if all the $l\times l$ minors of $\mathbf{M}$ generate $K[x,y]$. Similarly, $\mathbf{M} \in K[x,y]^{m \times l}$ can be defined as a zero right prime (ZRP) matrix.
\end{definition}

\begin{theorem}[Quillen-Suslin Theorem, \cite{Quillen1976Projective, Suslin1976Projective}]\label{QS}
 Let $\mathbf{M} \in K[x,y]^{l \times m}$ be a ZLP matrix. Then there exists $\mathbf{V}\in \GL_m(K[x,y])$ such that $\mathbf{M}\mathbf{V} = (\mathbf{I}_l, \mathbf{0}_{l\times (m-l)})$, where $\mathbf{I}_l$ is the $l\times l$ identity matrix.
\end{theorem}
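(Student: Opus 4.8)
The plan is to reduce the matrix-completion statement to the module-theoretic form of the theorem---that finitely generated projective modules over $K[x,y]$ are free (Serre's conjecture)---and then to prove that module statement by the classical local--global method, which for two variables reduces matters to the coefficient ring $K[x]$, a principal ideal domain. Write $R = K[x,y]$ and regard $\M$ as the $R$-linear map $R^m \to R^l$, $v \mapsto \M v$. The ZLP hypothesis says precisely that $I_l(\M) = R$; by the standard Fitting-ideal criterion (a finitely generated module vanishes iff its $0$-th Fitting ideal is the unit ideal, since $\mathrm{Fitt}_0 \subseteq \mathrm{Ann}$), and because $\mathrm{Fitt}_0(\mathrm{coker}\,\M) = I_l(\M)$, this is equivalent to $\M$ being surjective. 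Since $R^l$ is free, the surjection splits: there is $\N \in R^{m\times l}$ with $\M\N = \mathbf{I}_l$ and $R^m = \ker(\M) \oplus \N(R^l)$. Hence $P := \ker(\M)$ is a direct summand of $R^m$, so it is finitely generated projective, and it is even stably free since $P \oplus R^l \cong R^m$.

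First I would check that once $P$ is known to be free, the theorem follows by elementary linear algebra. Choosing a basis $p_1,\ldots,p_{m-l}$ of $P$ and writing $\sigma = \N$ for the splitting, the vectors $\sigma(u_1),\ldots,\sigma(u_l),p_1,\ldots,p_{m-l}$ (with $u_1,\ldots,u_l$ the standard basis of $R^l$) form a basis of $R^m$. Let $\V$ be the $m\times m$ matrix with these vectors as columns; then $\V \in \GL_m(R)$ because its columns are a basis. Since $\M\sigma(u_j) = \M\N u_j = u_j$ while $\M p_i = 0$, we obtain exactly $\M\V = (\mathbf{I}_l,\ \mathbf{0}_{l\times(m-l)})$. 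So the whole statement rests on the freeness of $P$.

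The hard part is proving that this finitely generated projective $R$-module is free. I would exploit the presentation $R = A[y]$ with $A = K[x]$ a principal ideal domain, combining two classical results. Quillen's patching theorem reduces extendedness to a local question: a finitely generated projective $A[y]$-module is extended from $A$ if and only if its localization at every maximal ideal $\mathfrak{m}$ of $A$ is extended from $A_{\mathfrak{m}}[y]$. Horrocks' theorem settles the local case: over the local ring $A_{\mathfrak{m}}$, every finitely generated projective $A_{\mathfrak{m}}[y]$-module is extended from $A_{\mathfrak{m}}$, hence free. Combining the two, $P \cong P_0 \otimes_A A[y]$ for some finitely generated projective $A$-module $P_0$; as $A = K[x]$ is a PID, $P_0$ is free, and therefore $P$ is free. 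As an alternative I would keep Suslin's route in reserve: one shows $\GL_m(R)$ acts transitively on unimodular rows by a change of variables making one entry monic in $y$ followed by induction, which would let me clear $\M$ to $(\mathbf{I}_l,\mathbf{0})$ directly by peeling off one row at a time.

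The main obstacle is precisely this freeness step. The surrounding reductions---identifying $\ker(\M)$ as stably free and assembling $\V$ from a basis---are routine, but neither Quillen patching nor Horrocks' theorem is formal: the former requires gluing a local isomorphism over $A_{\mathfrak{m}}[y]$ into a global one, and the latter requires the monic-polynomial analysis of projective modules over a local base. I would therefore concentrate the effort on verifying the hypotheses of Horrocks' theorem for the stably free module $P$ and on checking that the patching data glue, after which the principal-ideal-domain property of $K[x]$ closes the argument at once.
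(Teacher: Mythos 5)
The paper does not prove this statement at all: Theorem \ref{QS} is quoted as a classical black box, with the proof delegated to the cited papers of Quillen and Suslin. So there is no internal proof to compare against; what you have written is an outline of the standard Quillen-style proof of Serre's conjecture, specialized to two variables. Your reduction of the matrix statement to module theory is correct and complete: the ZLP hypothesis is $I_l(\M)=R$, which via $\mathrm{Fitt}_0(\mathrm{coker}\,\M)=I_l(\M)\subseteq \mathrm{Ann}(\mathrm{coker}\,\M)$ forces $\M\colon R^m\to R^l$ to be surjective; the splitting $\M\N=\mathbf{I}_l$ exhibits $\ker(\M)$ as a stably free direct summand; and once $\ker(\M)$ is free, concatenating the columns of $\N$ with a basis of $\ker(\M)$ produces the desired $\V\in\GL_m(R)$ with $\M\V=(\mathbf{I}_l,\mathbf{0}_{l\times(m-l)})$.

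The one soft spot is the sentence asserting that ``Horrocks' theorem settles the local case: over the local ring $A_{\mathfrak{m}}$, every finitely generated projective $A_{\mathfrak{m}}[y]$-module is extended from $A_{\mathfrak{m}}$.'' That is not what the affine Horrocks theorem says; it says such a module is free \emph{provided} it becomes free over the localization $A_{\mathfrak{m}}\langle y\rangle$ at the set of monic polynomials, and that hypothesis still has to be verified before Quillen patching can be invoked. In your situation it does hold: $A_{\mathfrak{m}}$ is a discrete valuation ring, so $A_{\mathfrak{m}}\langle y\rangle$ is a Noetherian regular domain of Krull dimension one, and a stably free module over a one-dimensional Noetherian ring is free (Bass cancellation in rank at least two, the top-exterior-power argument in rank one). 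Alternatively, for two variables you can bypass Quillen patching and Horrocks altogether by citing Seshadri's 1958 theorem that finitely generated projective modules over $D[y]$, $D$ a principal ideal domain, are free. With that step made explicit, your argument is a correct proof of the theorem --- which is more than the paper itself supplies, since it only cites the result.
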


\begin{theorem}[\cite{Wang2004}]\label{lin-bose}
 Let $\mathbf{M} \in K[x,y]^{l \times m}$ with rank $r$, where $1\leq r \leq l$. If $J_r(\mathbf{M}) = K[x,y]$, then there exist $\mathbf{G} \in K[x,y]^{l \times r}$ and $\mathbf{F} \in K[x,y]^{r \times m}$ such that $\mathbf{M}=\mathbf{G} \mathbf{F}$, where $d_r(\mathbf{G}) = d_r(\mathbf{M})$ and $\mathbf{F}$ is a ZLP matrix.
\end{theorem}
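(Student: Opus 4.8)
The plan is to produce the factorization through the row module of $\M$ and its saturation, and to convert the resulting free direct summand into an explicit ZLP matrix by means of Quillen--Suslin (Theorem \ref{QS}). Write $R = K[x,y]$ and let $N = R^{1\times l}\M \subseteq R^{1\times m}$ be the submodule spanned by the rows of $\M$; since $\rank(\M) = r$, the module $N$ has rank $r$. Form its saturation
\[
 \bar N = \{\mathbf w \in R^{1\times m} : a\mathbf w \in N \text{ for some } a \in R\setminus\{0\}\},
\]
so that $Q := R^{1\times m}/\bar N$ is torsion-free of rank $m-r$. If I can show that $Q$ is a free $R$-module, then the short exact sequence $0 \to \bar N \to R^{1\times m} \to Q \to 0$ splits, both $\bar N$ and $Q$ are free, $\bar N$ is a free direct summand of rank $r$, and a basis of $\bar N$, arranged as the rows of a matrix $\F \in R^{r\times m}$, is ZLP (it completes to a unimodular matrix).

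Granting this, the remaining conclusions follow quickly. Every row of $\M$ lies in $N \subseteq \bar N = \text{Row}(\F)$, so $\M = \G\F$ for a unique $\G \in R^{l\times r}$. For the determinantal condition I would exploit the ZLP property of $\F$ directly: by Theorem \ref{QS} there is $\V \in \GL_m(R)$ with $\F\V = (\mathbf I_r,\ \mathbf 0_{r\times(m-r)})$, whence $\M\V = \G\F\V = (\G,\ \mathbf 0)$. Since $d_r$ is invariant under equivalence (Lemma \ref{lem0}) and appending zero columns does not alter the $r\times r$ minors, $d_r(\M) = d_r(\M\V) = d_r((\G,\ \mathbf 0)) = d_r(\G)$, which is exactly the asserted identity.

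The real work, and the step I expect to be the main obstacle, is proving that $Q$ is free; this is precisely where the hypothesis $J_r(\M) = K[x,y]$ enters. My strategy is to verify local freeness of $Q$ at every prime and then invoke projectivity together with Quillen--Suslin. At primes of height $0$ or $1$ this is automatic, since a finitely generated torsion-free module over a field or a discrete valuation ring is free. The delicate case is a maximal ideal $\mathfrak m$ of height $2$, where $R_{\mathfrak m}$ is a $2$-dimensional regular local ring: here I would argue that $J_r(\M)=R$ forces some reduced minor $b_j$ to be a unit in $R_{\mathfrak m}$, so that after clearing the common divisor $d_r(\M)$ the localized rows of $\M$ already span a free direct summand, making $\bar N_{\mathfrak m}$ free and $Q_{\mathfrak m}$ free of rank $m-r$. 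Combining the cases, $Q$ is locally free of constant rank, hence projective over the Noetherian ring $R$, and therefore free by Quillen--Suslin. I anticipate that the careful bookkeeping at the height-$2$ primes---relating the non-vanishing of a single reduced minor to freeness of the saturated local module---will be the most technical part of the argument, and that without the condition $J_r(\M)=R$ the quotient $Q$ could fail to be reflexive at a closed point and thus fail to be free.
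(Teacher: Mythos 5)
The paper offers no proof of Theorem \ref{lin-bose}: it is imported verbatim from \cite{Wang2004} as the resolution of the Lin--Bose conjecture, so there is no in-paper argument to measure yours against. Judged on its own terms, your plan is sound and is essentially the module-theoretic proof known from the literature: everything reduces to showing that $Q=R^{1\times m}/\bar N$ is projective, Quillen--Suslin converts projective into free (and normalizes $\F$ to $(\mathbf{I}_r,\mathbf{0})$), and your derivation of $d_r(\G)=d_r(\M)$ from the equivalence $\M\sim(\G,\mathbf{0}_{l\times(m-r)})$ via Lemma \ref{lem0} is clean and correct, as is the reduction of projectivity to freeness of $Q_{\mathfrak m}$ at the (height-two) maximal ideals.

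The one place where your sketch is not yet a proof is exactly the step you flag: the assertion that ``after clearing the common divisor $d_r(\M)$ the localized rows of $\M$ already span a free direct summand'' cannot be taken literally, because the individual rows of $\M$ are not divisible by $d_r(\M)$. The statement you need is nevertheless true, and the standard way to make it precise is the adjugate trick. Fix a maximal ideal $\mathfrak m$ and a reduced minor $b_{j_0}\notin\mathfrak m$; after permuting rows and columns write $\M=\bigl(\begin{smallmatrix}\M_1&\M_2\\ \M_3&\M_4\end{smallmatrix}\bigr)$ with $\det(\M_1)=d\,b_{j_0}$, where $d=d_r(\M)$. Every entry of $\M_1^{\mathrm{adj}}\,(\M_1,\M_2)$ is, up to sign, an $r\times r$ minor of $\M$, hence divisible by $d$ in $R$, so one may write $\M_1^{\mathrm{adj}}(\M_1,\M_2)=d\,(b_{j_0}\mathbf{I}_r,\ \mathbf{C})$ with $\mathbf{C}$ over $R$. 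The matrix $(b_{j_0}\mathbf{I}_r,\mathbf{C})$ is ZLP over $R_{\mathfrak m}$ since $b_{j_0}$ is a unit there, and its row span equals $\bar N_{\mathfrak m}$: one inclusion holds because $d$ times each of its rows lies in $N_{\mathfrak m}$, and for the other, any $w\in\bar N_{\mathfrak m}$ equals $z\,(b_{j_0}\mathbf{I}_r,\mathbf{C})$ for some $z$ over the fraction field, and comparing the first $r$ coordinates forces $z=b_{j_0}^{-1}w_{[1..r]}\in R_{\mathfrak m}^{1\times r}$. Hence $\bar N_{\mathfrak m}$ is a free direct summand and $Q_{\mathfrak m}$ is free, which is what your argument requires; with this lemma supplied, the rest of your proof goes through.
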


\begin{lemma}[Primitive Factorization Theorem, \cite{Guiver1982Poly}]\label{lemma-PFT}
 Let $\M\in K[x,y]^{l\times m}$ be of full row rank, and $h\in K[x]$ be a divisor of $d_l(\M)$. Then there exist $\G\in K[x,y]^{l\times l}$ and $\F\in K[x,y]^{l\times m}$ such that $\M = \G\F$ and $\det(\G) = h$.
\end{lemma}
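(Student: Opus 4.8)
The plan is to prove the statement by induction on the number of irreducible factors of $h$ in $K[x]$, reducing everything to the problem of extracting a single irreducible factor $p=p(x)$ dividing $d_l(\M)$. The base case is $h\in K^*$, where one takes $\G=\diag(h,1,\ldots,1)$ and $\F=\diag(h^{-1},1,\ldots,1)\M$. For the inductive step it suffices to produce, for each irreducible $p\mid d_l(\M)$, a polynomial matrix $\G_1\in K[x,y]^{l\times l}$ with $\det(\G_1)=p$ together with a full-row-rank $\M_1\in K[x,y]^{l\times m}$ satisfying $\M=\G_1\M_1$; the remaining factorization then follows by applying the same construction to $\M_1$, since the Cauchy-Binet formula gives $d_l(\M)=p\cdot d_l(\M_1)$, so the cofactor $h/p$ still divides $d_l(\M_1)$.

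To extract a single irreducible $p\in K[x]$, the key idea is reduction modulo $p$. Since $p$ is irreducible in $K[x]$, the residue ring $\overline{K}:=K[x]/\langle p\rangle$ is a field and $K[x,y]/\langle p\rangle\cong \overline{K}[y]$ is a Euclidean domain. Because $p\mid d_l(\M)$, every $l\times l$ minor of $\M$ vanishes modulo $p$, so the reduced matrix $\overline{\M}\in\overline{K}[y]^{l\times m}$ has rank strictly less than $l$. Over the Euclidean domain $\overline{K}[y]$ one can therefore clear one row to zero by Gaussian (gcd-style) elimination using only transvections (adding a $\overline{K}[y]$-multiple of one row to another) and row swaps; crucially, no scaling of a row by a unit is needed to expose a zero row once the rank is deficient. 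This yields $\overline{\U}\in\GL_l(\overline{K}[y])$, a product of such elementary matrices, with $\overline{\U}\,\overline{\M}$ having (say) its first row equal to zero.

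The heart of the argument, and the step I expect to be the main obstacle, is lifting $\overline{\U}$ to a genuinely unimodular matrix over $K[x,y]$. A generic entrywise lift of a unimodular matrix over $\overline{K}[y]$ need not be unimodular over $K[x,y]$, which is exactly why restricting the elimination to transvections and swaps matters: each such elementary matrix over $\overline{K}[y]$ lifts to the corresponding elementary matrix over $K[x,y]$, which is unimodular, whereas a row-scaling by a unit of $\overline{K}[y]$ would lift to multiplication by a non-unit of $K[x,y]$. Composing these lifts produces $\U\in\GL_l(K[x,y])$ with $\U\equiv\overline{\U}\pmod p$ and $\det(\U)=\pm1$. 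Consequently $\U\M\equiv\overline{\U}\,\overline{\M}\pmod p$ has its first row divisible by $p$, so $\U\M=\diag(p,1,\ldots,1)\,\M_1$ where $\M_1\in K[x,y]^{l\times m}$ is obtained by dividing that row by $p$.

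Setting $\G_1:=\U^{-1}\diag(p,1,\ldots,1)$ gives a polynomial matrix (as $\U^{-1}$ is unimodular) with $\M=\G_1\M_1$ and $\det(\G_1)=\pm p$; a sign normalization on one column of $\G_1$ makes $\det(\G_1)=p$ exactly. Full row rank of $\M_1$ follows from $\det(\G_1)=p\neq0$ together with $\rank(\M)=l$. Iterating over the irreducible factors of $h$ and collecting the left factors into $\G=\G_1\cdots\G_k$ then yields $\M=\G\F$ with $\det(\G)=h$, completing the proof. The only delicate points are verifying that rank-deficiency over a Euclidean domain permits zeroing a row without unit-scaling, and confirming that the elementary lifts keep $\U$ unimodular; the determinant and rank bookkeeping via Cauchy-Binet is routine.
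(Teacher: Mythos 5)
The paper does not prove this lemma at all; it is quoted as the Primitive Factorization Theorem with a citation to Guiver and Bose, so there is no in-paper argument to compare against. Your proof is correct and is essentially the classical argument for that theorem: extract the irreducible factors $p\in K[x]$ of $h$ one at a time by reducing modulo $p$, eliminating over the Euclidean domain $(K[x]/\langle p\rangle)[y]$, and lifting the elimination back to $K[x,y]$. The two points you flag as delicate do hold: putting the rank-deficient reduced matrix into echelon form over $(K[x]/\langle p\rangle)[y]$ needs only transvections and swaps (the Euclidean algorithm on a column never requires unit scaling, and an echelon matrix of rank $r<l$ has a zero row), these lift entrywise to determinant-$\pm1$ matrices over $K[x,y]$, and the remaining bookkeeping (Cauchy--Binet giving $d_l(\M)=p\cdot d_l(\M_1)$ up to units, the sign/unit normalization absorbed into the base case and a compensating change in $\M_1$) is routine as you say.
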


\section{Matrix equivalence theory}\label{sec3}

 Let $\M\in K[x,y]^{l\times l}$ and $\det(\M) = fy+g$, where $f,g\in K[x]$ and $f \neq 0$. In this section, we first consider the case where $\gcd(f,g)=1$, and prove that $\M$ must be equivalent to its Smith normal form. Second, we deal with the case where $\gcd(f,g)$ is a nontrivial polynomial in $K[x]$, and solve Problem \ref{intro_problem} by means of mathematical induction. Subsequently, we extend the square matrix case to rank-deficient and non-square cases using the Quillen-Suslin theorem, thereby obtaining a more general conclusion. Finally, we illustrate the validity of the equivalence theory through an example.

\begin{lemma}\label{lem1}
 Let $h\in K[x,y]\setminus \{0\}$, and $p,q\in K[x]$ be such that $p \neq 0$ and $\gcd(p,q)=1$. Then $(py+q) \mid h$ if and only if $h(x,-\frac{q}{p}) = 0$.
\end{lemma}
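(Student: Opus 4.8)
The plan is to prove the two directions separately, with the backward direction being the substantive one. For the forward direction, if $(py+q)\mid h$ then I would write $h = (py+q)\,k$ with $k\in K[x,y]$; substituting $y = -\tfrac{q}{p}$ makes the factor $py+q$ vanish, so $h(x,-\tfrac{q}{p}) = 0$ follows immediately. This uses nothing beyond a direct substitution.

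For the backward direction, the idea is to first establish the divisibility over the larger ring $K(x)[y]$ and then descend to $K[x,y]$. I would regard $h$ as a polynomial in $y$ with coefficients in the field $K(x)$. Since $p\neq 0$, the value $-\tfrac{q}{p}$ lies in $K(x)$, and the hypothesis $h(x,-\tfrac{q}{p}) = 0$ says precisely that $-\tfrac{q}{p}$ is a root of $h\in K(x)[y]$. Because $K(x)[y]$ is a polynomial ring over a field, the factor theorem yields $(y + \tfrac{q}{p}) \mid h$ in $K(x)[y]$, and multiplying by the unit $p\in K(x)$ gives $(py+q)\mid h$ in $K(x)[y]$.

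The remaining, and crucial, step is to upgrade this divisibility from $K(x)[y]$ to $K[x,y] = K[x][y]$, and this is exactly where the coprimality hypothesis $\gcd(p,q)=1$ enters. Viewing $py+q$ as a polynomial in $y$ over the UFD $K[x]$, its content is $\gcd(p,q) = 1$, so $py+q$ is primitive. By Gauss's lemma, a primitive polynomial of $K[x][y]$ that divides some $h\in K[x][y]$ in $K(x)[y]$ already divides $h$ in $K[x][y]$; hence $(py+q)\mid h$ in $K[x,y]$, as required.

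I expect the main obstacle to be this descent step. One must not simply divide $h$ by $py+q$ inside $K[x][y]$, since the leading coefficient $p$ is not a unit there and ordinary division would introduce denominators. The coprimality $\gcd(p,q)=1$ is indispensable precisely to make $py+q$ primitive so that Gauss's lemma applies. As a self-contained alternative to invoking Gauss's lemma, one could instead perform pseudo-division to write $p^{\,n}h = (py+q)Q + R$ with $R\in K[x]$ and $\deg_y Q < n$ (where $n = \deg_y h$), deduce $R=0$ by substituting $y=-\tfrac{q}{p}$, and then cancel the factor $p^{\,n}$ using primitivity of $py+q$ again; I would present the Gauss's lemma route as the cleaner argument.
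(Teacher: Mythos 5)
Your proof is correct. The forward direction matches the paper's (which simply declares it straightforward). For the backward direction you take a mildly different route from the paper: you first obtain $(py+q)\mid h$ in $K(x)[y]$ via the factor theorem and then descend to $K[x][y]$ by Gauss's lemma, using $\gcd(p,q)=1$ to make $py+q$ primitive. The paper instead performs pseudo-division directly over $K[x,y]$, writing $p^s h = h_1\cdot(py+q)+\gamma$ with $\gamma\in K[x]$, deduces $\gamma=0$ by the substitution $y=-\tfrac{q}{p}$, and then cancels $p^s$ by observing that $\gcd(p,q)=1$ forces $py+q$ to be irreducible in the UFD $K[x,y]$ and of positive degree in $y$, so it cannot divide $p^s$. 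The two descents are essentially equivalent in content (primitivity versus irreducibility of $py+q$, both consequences of the coprimality hypothesis), and you even sketch the paper's pseudo-division variant as your alternative; your Gauss's-lemma presentation is arguably cleaner, while the paper's is more self-contained and elementary. You correctly identify the descent from $K(x)[y]$ to $K[x][y]$ as the crux and correctly locate where $\gcd(p,q)=1$ is used, so there is no gap.
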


\begin{proof}
 The necessity is straightforward to establish. It remains to prove the sufficiency. By the polynomial pseudo-division over $K[x,y]$, there exist $h_1\in K[x,y]$, $\gamma\in K[x]$ and some nonnegative integer $s\in \mathbb{N}$ such that
 \begin{equation}\label{equ-lem1-2}
  p(x)^s\cdot h(x,y) = h_1(x,y)\cdot (p(x)y+q(x))+\gamma(x).
 \end{equation}
 Replacing $y$ with $-\frac{q}{p}$ in Equation \eqref{equ-lem1-2}, we obtain $\gamma(x) =0$. This implies that $(py+q) \mid p^sh$. It follows from $\gcd(p,q) = 1$ that $py+q$ is an irreducible polynomial in $K[x,y]$. Since $p\in K[x]$, we get $(py+q) \mid h$. The proof is completed.
\end{proof}

\begin{lemma}\label{lem2}
 Let $\M \in K[x,y]^{l \times l}$ with $\det(\M) = py+q$, where $p,q\in K[x]$ satisfy $p \neq 0$ and $\gcd(p,q)=1$. Then there exist $\V\in \GL_l(K[x])$ and $\G\in \GL_l(K[x,y])$ such that $\M = \V \cdot \diag(1,\ldots,1,py+q) \cdot \G$.
\end{lemma}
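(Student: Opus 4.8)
The plan is to recast the desired factorization as a statement about row operations over $K[x]$. Writing $\V^{-1}=\W\in\GL_l(K[x])$, the identity $\M=\V\cdot\diag(1,\ldots,1,py+q)\cdot\G$ is equivalent to saying that $\N:=\W\M$ has its last row divisible (entrywise) by $py+q$ and that the matrix $\G$ obtained from $\N$ by dividing its last row by $py+q$ is unimodular over $K[x,y]$. The crucial bookkeeping observation is that the unimodularity of $\G$ then comes for free: since $\det\W\in K^{*}$ we have $\det\N=c\,(py+q)$ with $c\in K^{*}$, while expansion along the diagonal form gives $\det\N=(py+q)\det\G$; as $K[x,y]$ is a domain this forces $\det\G=c\in K^{*}$. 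Thus the whole problem reduces to finding a single matrix $\W\in\GL_l(K[x])$ whose last row, after multiplication by $\M$, becomes divisible by $py+q$.

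To detect divisibility by $py+q$ I would use the evaluation homomorphism $\psi\colon K[x,y]\to K(x)$ determined by $y\mapsto -\tfrac{q}{p}$ (well defined since $p\neq0$). By Lemma \ref{lem1}, a polynomial $h$ is divisible by $py+q$ exactly when $\psi(h)=h(x,-\tfrac{q}{p})=0$. Applying $\psi$ entrywise to $\M$ produces $\bar{\M}\in K(x)^{l\times l}$ with $\det\bar{\M}=\psi(py+q)=0$, so $\bar{\M}$ is singular and has a nonzero left kernel over $K(x)$. Clearing denominators and dividing by the gcd of the entries, I obtain a primitive vector $w\in K[x]^{l}$ with $w^{\top}\bar{\M}=0$. (A quick adjugate argument---if every $(l-1)\times(l-1)$ minor were divisible by the irreducible $py+q$, then $\M\,\mathrm{adj}(\M)=(py+q)\mathbf{I}_l$ would force $\M$ to be unimodular---shows in fact $d_{l-1}(\M)=1$, so $\bar{\M}$ has rank exactly $l-1$ and the left kernel is a line; but for the construction only $\det\bar{\M}=0$ is needed.)

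Since $K[x]$ is a PID, the primitive vector $w$ is part of a basis of $K[x]^{l}$, so $w^{\top}$ can be completed to the last row of a matrix $\W\in\GL_l(K[x])$ (e.g.\ via the Smith form of the row $w^{\top}$, which is $(1,0,\ldots,0)$, followed by a row permutation). Setting $\N=\W\M$, the last row of $\N$ equals $w^{\top}\M$, and since $\psi(w^{\top}\M)=w^{\top}\bar{\M}=0$, Lemma \ref{lem1} shows each of its entries is divisible by $py+q$. Factoring $\N=\diag(1,\ldots,1,py+q)\cdot\G$ and invoking the determinant count above gives $\G\in\GL_l(K[x,y])$; putting $\V=\W^{-1}\in\GL_l(K[x])$ yields $\M=\V\cdot\diag(1,\ldots,1,py+q)\cdot\G$, as required.

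I expect the main obstacle to be conceptual rather than computational: recognizing that the asymmetric requirement (left factor over $K[x]$, right factor over $K[x,y]$) is matched precisely by a primitive \emph{left} null vector of the evaluated matrix $\bar{\M}$ together with completion over the PID $K[x]$. The only input beyond the excerpt is the standard fact that a primitive vector over a PID extends to a unimodular matrix; everything else follows from Lemma \ref{lem1} and elementary determinant bookkeeping.
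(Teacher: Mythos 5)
Your proposal is correct and follows essentially the same route as the paper: evaluate at $y=-\tfrac{q}{p}$, use Lemma \ref{lem1} to show the last row of a suitably transformed $\M$ is divisible by $py+q$, and obtain unimodularity of $\G$ from the determinant identity. The only cosmetic difference is that you produce the matrix in $\GL_l(K[x])$ by completing a primitive left null vector of $\M(x,-\tfrac{q}{p})$, whereas the paper gets the same thing by row-reducing $p^s\M(x,-\tfrac{q}{p})$ over the Euclidean domain $K[x]$ until its last row vanishes.
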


\begin{proof}
 Let $\N = \M(x,-\frac{q(x)}{p(x)})$. Then $\N \in K(x)^{l\times l}$ and $\det(\N) = 0$. Let $s \in \mathbb{N}$ be a sufficiently large positive integer such that $p^s\cdot \N \in K[x]^{l \times l}$. As $K[x]$ is a Euclidean domain, there exists $\U\in \GL_l(K[x])$ such that
 \begin{equation}\label{equ-lem2-1}
  \U(p^s\cdot \N) = \begin{pmatrix}
		\N' \\
		\mathbf{0}_{1\times l}
	\end{pmatrix},
 \end{equation}
 where $\N' \in K[x]^{(l-1) \times l}$. Since $\U(p^s\cdot \N) = p^s\cdot (\U\N)$, it follows from Equation \eqref{equ-lem2-1} that the last row of $\U\N$ is a zero vector. Let $\M' = \U\M$. Then
 \begin{equation*}\label{equ-lem2-2}
  \M'(x,-\frac{q(x)}{p(x)}) = \U\N.
 \end{equation*}
 Thus, the last row of $\M'(x,-\frac{q(x)}{p(x)})$ is a zero vector. By Lemma \ref{lem1}, all entries in the last row of $\M'$ are divisible by $py+q$. Hence, there exists $\G \in K[x,y]^{l \times l}$ such that
 \begin{equation}\label{equ-lem2-3}
  \U\M = \diag(1,\ldots,1,py+q)\cdot \G.
 \end{equation}
 Since $\det(\M) = py+q$ and $\U$ is a unimodular matrix, it follows from Equation \eqref{equ-lem2-3} that $\G \in \GL_l(K[x,y])$. Let $\V = \U^{-1}$. Then
 \begin{equation*}\label{equ-lem2-4}
  \M = \V \cdot \diag(1,\ldots,1,py+q) \cdot \G.
 \end{equation*}
 The proof is completed.
\end{proof}

\begin{remark}
 If $q=0$ in Lemmas \ref{lem1} and \ref{lem2}, it follows from $\gcd(p,q) = 1$ that $p=1$. In this case, $s$ in both Equations \eqref{equ-lem1-2} and \eqref{equ-lem2-1} is $0$, and $\N$ in Equation \eqref{equ-lem2-1} is a polynomial matrix in $K[x]^{l\times l}$.
\end{remark}

 It is easy to verify that the Smith normal form of $\M$ in Lemma \ref{lem2} is $ \diag(1,\ldots,1,py+q)$. Lemma \ref{lem2} implies that $\M$ is equivalent to its Smith normal form.

\begin{lemma}\label{main-lemma-1}
 Let $f,v_1,v_2\in K[x]$ with $\gcd(f,v_1,v_2) = 1$, where $f \neq 0$. Then there exists $h\in K[x]$ such that $\gcd(f,v_1+hv_2) = 1$.
\end{lemma}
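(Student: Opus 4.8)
The plan is to reduce the statement to a prime-avoidance argument carried out modulo the distinct irreducible factors of $f$, and then to assemble the desired $h$ by the Chinese Remainder Theorem. Write $f = c\prod_{i=1}^{k}\pi_i^{e_i}$ with $c \in K\setminus\{0\}$, distinct monic irreducibles $\pi_1,\ldots,\pi_k \in K[x]$, and exponents $e_i \geq 1$. Since $\gcd(f, v_1 + h v_2) = 1$ holds precisely when no $\pi_i$ divides $v_1 + h v_2$, it suffices to produce a single $h$ for which $\pi_i \nmid (v_1 + h v_2)$ simultaneously for all $i$. (The case where $f$ is a nonzero constant, i.e. $k=0$, is immediate with $h = 0$.)

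First I would analyze each factor $\pi_i$ separately using the hypothesis $\gcd(f, v_1, v_2) = 1$, which forces that $\pi_i$ cannot divide both $v_1$ and $v_2$. There are two cases. If $\pi_i \mid v_2$, then $\pi_i \nmid v_1$, and hence $v_1 + h v_2 \equiv v_1 \not\equiv 0 \pmod{\pi_i}$ for every choice of $h$; such factors impose no constraint. If instead $\pi_i \nmid v_2$, then in the residue field $L_i = K[x]/\langle \pi_i\rangle$ the image $\bar{v}_2$ is invertible, so $v_1 + h v_2 \equiv 0 \pmod{\pi_i}$ occurs exactly when $\bar{h} = -\bar{v}_1\bar{v}_2^{\,-1}$, a single fixed element $a_i \in L_i$. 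Thus for each such factor I only need to arrange $\bar{h} \neq a_i$ in $L_i$.

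Next I would invoke the Chinese Remainder Theorem: because the $\pi_i$ are pairwise coprime, the residues of $h$ modulo the various $\pi_i$ can be prescribed independently. For each factor with $\pi_i \nmid v_2$ I choose a residue $b_i \in L_i$ with $b_i \neq a_i$, and solving the resulting congruence system yields $h \in K[x]$ with $\bar{h} = b_i$, whence $\pi_i \nmid (v_1 + h v_2)$. Combined with the automatically-satisfied factors of the first case, no irreducible factor of $f$ divides $v_1 + h v_2$, so $\gcd(f, v_1 + h v_2) = 1$, as required.

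The only delicate point—and the step I would verify most carefully—is the existence of an admissible residue $b_i \neq a_i$ when $K$ is a finite field; here one observes that $L_i$, being a field extension of $K$, has at least $|K| \geq 2$ elements, so a second value is always available. Apart from this the argument is uniform, and the degenerate situations $v_1 = 0$ or $v_2 = 0$ are absorbed automatically, since they merely render one of the two cases vacuous.
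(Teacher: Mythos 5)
Your proof is correct and follows essentially the same route as the paper's: factor $f$ into distinct irreducibles, observe that $\gcd(f,v_1,v_2)=1$ rules out each irreducible dividing both $v_1$ and $v_2$, prescribe a good residue for $h$ modulo each factor, and glue with the Chinese Remainder Theorem. The only cosmetic difference is that the paper avoids the bad residue by choosing $\delta_i\in\{0,1\}$ rather than an arbitrary element of the residue field, which sidesteps your (correctly resolved) worry about small fields.
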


\begin{proof}
 There are two cases. If $f\in K \setminus \{0\}$, then the conclusion obviously holds for any polynomial $h\in K[x]$. If $f\in K[x] \setminus K$, then we may assume without loss of generality that $f = \alpha p_1^{r_1}\cdots p_k^{r_k}$ is an irreducible factorization of $f$, where $p_1,\ldots,p_k\in K[x]$ are irreducible factors that are pairwise  coprime, $r_1,\ldots,r_k$ are positive integers and $\alpha\in K \setminus \{0\}$. For each integer $i$ with $1\leq i \leq k$, it is easy to see that $p_i$ cannot divide both $v_1$ and $v_2$; otherwise, it would contradict the fact that $\gcd(f,v_1,v_2) = 1$. This implies that there exits some integer $\delta_i\in \{0,1\}$ such that
 \begin{equation}\label{equ-mian-lemma-1}
  v_1 + \delta_i v_2 \neq 0  ~ (\text{mod} ~ p_i).
 \end{equation}
 By the Chinese Remainder Theorem, there exists $h\in K[x]$ such that
 \begin{equation}\label{equ-mian-lemma-2}
  h \equiv \delta_i ~ (\text{mod} ~ p_i).
 \end{equation}
 Combining Equations \eqref{equ-mian-lemma-1} and \eqref{equ-mian-lemma-2}, we have
 \begin{equation}\label{equ-mian-lemma-3}
  v_1 + h v_2 \neq 0  ~ (\text{mod} ~ p_i) ~ \text{for each} ~ i=1,\ldots,k.
 \end{equation}
 It follows from Equation \eqref{equ-mian-lemma-3} that
 \[ \gcd(f,v_1+hv_2) = 1.\]
 The proof is completed.
\end{proof}

\begin{lemma}\label{lemW}
 Let
 \[\W= \begin{pmatrix}
		f_1 &       0     &   \cdots   &    0      & v_1(py+q)    \\
		 0  &      f_2    &    \ddots  &  \vdots   & v_2(py+q)    \\
     \vdots &    \ddots   &    \ddots  &    0      & \vdots       \\
	 \vdots &             &   \ddots   &  f_{l-1}  & v_{l-1}(py+q)\\
		0   &   \cdots    &  \cdots    &     0     & v_{l}(py+q)
 \end{pmatrix}\in K[x,y]^{l\times l},\]
 where $f_{1},f_2,\ldots,f_{l-1}\in K[x] \setminus \{0\}$ satisfy $f_1\mid f_2 \mid \cdots \mid f_{l-1}$, and $p,q,v_{1},\ldots,v_l\in K[x]$. If $I_1(\W) = K[x,y]$, then $\langle f_1, py+q \rangle = K[x,y]$ and $\gcd(f_1,v_1,\ldots,v_l) = 1$.
\end{lemma}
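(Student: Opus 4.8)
The plan is to exploit the very rigid structure of $\W$: its only nonzero entries are the diagonal polynomials $f_1,\dots,f_{l-1}\in K[x]$ and the last-column entries $v_1(py+q),\dots,v_l(py+q)$, so that
\[
 I_1(\W) = \langle f_1,\dots,f_{l-1},\,v_1(py+q),\dots,v_l(py+q)\rangle.
\]
First I would collapse the two blocks of generators. The divisibility chain $f_1\mid f_2\mid\cdots\mid f_{l-1}$ gives $f_i\in\langle f_1\rangle$ for every $i$, so $\langle f_1,\dots,f_{l-1}\rangle=\langle f_1\rangle$. Writing $d:=\gcd(v_1,\dots,v_l)$ in the PID $K[x]$, a B\'ezout relation $\sum_i a_i v_i=d$ with $a_i\in K[x]$, together with $d\mid v_i$, remains valid in $K[x,y]$, so $\langle v_1,\dots,v_l\rangle_{K[x,y]}=\langle d\rangle$ and hence $\langle v_1(py+q),\dots,v_l(py+q)\rangle = \langle d(py+q)\rangle$. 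This reduces the hypothesis to the single statement $I_1(\W)=\langle f_1,\,d(py+q)\rangle = K[x,y]$.

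For the first conclusion, I would simply note the inclusion $\langle f_1,d(py+q)\rangle\subseteq\langle f_1,py+q\rangle$, since $d(py+q)$ is a multiple of $py+q$. Because the smaller ideal already equals $K[x,y]$ by hypothesis, the larger one does too, giving $\langle f_1,py+q\rangle = K[x,y]$ at once.

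For the second conclusion I would argue by contradiction. Set $g:=\gcd(f_1,v_1,\dots,v_l)=\gcd(f_1,d)\in K[x]$ and suppose $g$ is not a unit. Then $g\mid f_1\mid f_i$ for all $i$ and $g\mid d\mid v_i$, so $g$ divides every entry of $\W$; consequently $I_1(\W)\subseteq\langle g\rangle_{K[x,y]}$. The point to verify is that a non-unit of $K[x]$ generates a proper ideal of $K[x,y]$: indeed the units of $K[x,y]$ are exactly the nonzero constants, so $g\notin K$ forces $\langle g\rangle\subsetneq K[x,y]$. This contradicts $I_1(\W)=K[x,y]$, whence $g$ must be a unit and $\gcd(f_1,v_1,\dots,v_l)=1$.

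I do not expect a genuine obstacle here; the lemma is essentially a bookkeeping statement about the generators of $I_1(\W)$. The only places demanding a little care are the transfer of the gcd/B\'ezout identities from $K[x]$ to $K[x,y]$ and the observation that non-constant polynomials of $K[x]$ fail to be units in $K[x,y]$, both of which are elementary.
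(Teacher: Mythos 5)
Your proof is correct and follows essentially the same route as the paper: both start from the explicit generating set $I_1(\W)=\langle f_1,\ldots,f_{l-1},v_1(py+q),\ldots,v_l(py+q)\rangle$ and conclude via the containments $I_1(\W)\subseteq\langle f_1,py+q\rangle$ and $I_1(\W)\subseteq\langle f_1,v_1,\ldots,v_l\rangle$ forced by the divisibility chain. Your extra bookkeeping (collapsing to $\langle f_1,d(py+q)\rangle$ and arguing the gcd claim by contradiction) is harmless but not needed.
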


\begin{proof}
 As $I_1(\W) = \langle f_1, \ldots, f_{l-1}, v_1(py+q),\ldots,v_l(py+q) \rangle$, it follows from $f_1\mid f_2 \mid \cdots \mid f_{l-1}$ that
 \[ I_1(\W) \subseteq \langle f_1, py+q \rangle ~ \text{and} ~ I_1(\W) \subseteq \langle f_1, v_1,\ldots,v_l \rangle.\]
 By the fact that $I_1(\W) = K[x,y]$, we have $\langle f_1, py+q \rangle = \langle f_1, v_1,\ldots,v_l \rangle = K[x,y]$. Obviously, $\gcd(f_1,v_1,\ldots,v_l) = 1$. The proof is completed.
\end{proof}

\begin{lemma}\label{lem7}
 Let $\M \in K[x,y]^{l\times l}$ with $\det(\M) = fy+g$, where $f,g\in K[x]$ and $f \neq 0$. If $J_i(\M) = K[x,y]$ for $i=1,\ldots,l$, then $\M \sim (\B_1,(py+q)\mathbf{b})$, where $\mathbf{B}_1 \in K[x]^{l \times (l-1)}$, $\mathbf{b} \in K[x]^{l\times 1}$ and $p,q\in K[x]$ satisfy $p \neq 0$ and $\gcd(p,q) = 1$.
\end{lemma}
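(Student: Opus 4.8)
The plan is to decompose $\M$ into a ``univariate-in-$x$'' factor and a genuinely bivariate factor of $y$-degree one, reduce each factor to a normal form via the lemmas above, and recombine them in a carefully chosen order so that all $y$-dependence is forced into the last column. First I would write $d = \gcd(f,g)\in K[x]\setminus\{0\}$ and set $p = f/d$, $q = g/d$, so that $\det(\M) = d(py+q)$ with $\gcd(p,q)=1$ and $p\neq 0$. Since $d\in K[x]$ divides $d_l(\M)=\det(\M)$, the Primitive Factorization Theorem (Lemma \ref{lemma-PFT}) yields $\M = \G\F$ with $\G,\F\in K[x,y]^{l\times l}$, $\det(\G)=d$, and hence $\det(\F)=py+q$. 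Because $py+q$ is irreducible in $K[x,y]$ while $d\in K[x]$, we have $\gcd(\det(\G),\det(\F))=1$, so Lemma \ref{lem3} applied to the hypothesis $J_i(\M)=K[x,y]$ gives $J_i(\G)=K[x,y]$ for all $i$.

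Next I would use that $\det(\G)=d\in K[x]\setminus\{0\}$ together with $J_i(\G)=K[x,y]$ to invoke Lemma \ref{lem4}: the factor $\G$ is equivalent to its Smith normal form $\mathbf{D}=\diag(h_1,\ldots,h_l)$. Since $h_1\mid\cdots\mid h_l$ and $h_1\cdots h_l=d\in K[x]$, each $h_i$ divides $d$, and a divisor in $K[x,y]$ of an element of $K[x]$ must itself lie in $K[x]$; thus $\mathbf{D}\in K[x]^{l\times l}$. Writing $\G=\U_0\mathbf{D}\V_0$ with $\U_0,\V_0\in\GL_l(K[x,y])$, I would absorb $\U_0$ on the left and transport $\V_0$ into the second factor, replacing $\F$ by $\F'=\V_0\F$, whose determinant equals $py+q$ up to a nonzero constant. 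Applying Lemma \ref{lem2} to $\F'$ (absorbing that constant into the unimodular $K[x]$-factor) then produces $\F'=\V_1\cdot\diag(1,\ldots,1,py+q)\cdot\G_1$ with $\V_1\in\GL_l(K[x])$ and $\G_1\in\GL_l(K[x,y])$.

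The crucial point is the order of these reductions. Cancelling $\U_0$ on the left and $\G_1$ on the right yields $\M\sim \mathbf{D}\V_1\cdot\diag(1,\ldots,1,py+q)$, and here $\mathbf{D}\V_1$ is a product of a diagonal matrix over $K[x]$ with a matrix in $\GL_l(K[x])$, hence lies in $K[x]^{l\times l}$ and is free of $y$. Right-multiplication by $\diag(1,\ldots,1,py+q)$ leaves the first $l-1$ columns untouched and multiplies the last column by $py+q$, so $\M\sim(\B_1,(py+q)\mathbf{b})$ with $\B_1\in K[x]^{l\times(l-1)}$ the first $l-1$ columns of $\mathbf{D}\V_1$ and $\mathbf{b}\in K[x]^{l\times 1}$ its last column, as required.

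The main obstacle I anticipate is precisely guaranteeing that the first $l-1$ columns come out genuinely $y$-free. This is exactly what forces the sequencing above: one must reduce $\G$ to its (necessarily $K[x]$-valued) Smith form first, push its right-hand unimodular factor into $\F$, and only then apply Lemma \ref{lem2}, so that the $K[x]$-unimodular matrix $\V_1$ combines with the $K[x]$-diagonal $\mathbf{D}$ to stay within $K[x]^{l\times l}$. Applying Lemma \ref{lem2} directly to $\F$ instead would leave the residual $y$-dependence of $\G$ scattered across all columns, and the desired shape would fail.
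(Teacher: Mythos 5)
Your proposal is correct and follows essentially the same route as the paper: factor $\M = \G\F$ via the Primitive Factorization Theorem with $\det(\G)=\gcd(f,g)$, use Lemmas \ref{lem3} and \ref{lem4} to put $\G$ into its $K[x]$-valued Smith form, push its right unimodular factor into $\F$ before applying Lemma \ref{lem2}, and read off $(\B_1,(py+q)\mathbf{b})$ from the product of the two $K[x]$-matrices. You even supply two small details the paper leaves implicit, namely the coprimality $\gcd(\det(\G),\det(\F))=1$ needed for Lemma \ref{lem3} and the reason the Smith form of $\G$ lies in $K[x]^{l\times l}$.
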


\begin{proof}
 Let $h = \gcd(f,g)$. Then there exist $p,q\in K[x]$ such that
 \[ f = hp ~\text{and} ~ g = hq,\]
 where $p \neq 0$ and $\gcd(p,q)=1$. Since $\det(\M) = h(py+q)$ and $h\in K[x]$, by the Primitive Factorization Theorem (Lemma \ref{lemma-PFT}) there exist $\G, \F \in K[x,y]^{l \times l}$ such that
 \begin{equation}\label{equ-lem7-1}
  \M = \G\F ~ \text{and} ~ \det(\G) = h.
 \end{equation}
 Since $\det(\M) = \det(\G)\cdot \det(\F)$, it follows from Equation \eqref{equ-lem7-1} that $\det(\F) = py+q$. According to Lemma \ref{lem3}, we have $J_i(\G) = K[x,y]$ for all $i = 1, \ldots, l$. Based on Lemma \ref{lem4}, $\G$ is equivalent to its Smith normal form $\S_{\G}$, where $\S_{\G}\in K[x]^{l\times l}$. That is, there exist $\P, \Q \in \GL_l(K[x,y])$ such that
 \begin{equation}\label{equ-lem7-2}
  \G = \P \cdot \S_{\G} \cdot \Q.
 \end{equation}
 Combining Equations \eqref{equ-lem7-1} and \eqref{equ-lem7-2}, we obtain
 \begin{equation}\label{equ-lem7-3}
  \M = \P \cdot \S_{\G} \cdot \Q \cdot \F.
 \end{equation}
 Let $\F' = \Q \cdot \F$. By the fact that $\Q$ is unimodular, we get $\det(\F') = \det(\F) = py+q$. By Lemma \ref{lem2}, there exist $\V \in \GL_l(K[x])$ and $\F_1 \in \GL_l(K[x,y])$ such that
 \begin{equation}\label{equ-lem7-4}
  \F' = \V \cdot \diag(1, \ldots, 1, py + q) \cdot \F_1.
 \end{equation}
 It follows from Equations \eqref{equ-lem7-3} and \eqref{equ-lem7-4} that
 \begin{equation}\label{equ-lem7-5}
  \M = \P \cdot \S_{\G} \cdot \V \cdot \diag(1, \ldots, 1, py + q) \cdot \F_1.
 \end{equation}
 Obviously, $\S_{\G} \cdot \V\in K[x]^{l\times l}$. Let $\B_1\in K[x]^{l \times (l-1)}$ be the matrix formed by the first $l-1$ columns of $\S_{\G} \cdot \V$, and let $\mathbf{b} \in K[x]^{l\times 1}$ be the last column of $\S_{\G} \cdot \V$. Then
 \begin{equation*}\label{equ-lem7-6}
  \M = \P \cdot (\B_1,(py+q)\mathbf{b}) \cdot \F_1.
 \end{equation*}
 Since $\P,\F_1$ are unimodular matrices, we have
 \begin{equation*}\label{equ-lem7-7}
  \M \sim (\B_1,(py+q)\mathbf{b}).
 \end{equation*}
 The proof is completed.
\end{proof}
 	
\begin{lemma} \label{lem8}
 Let $\mathbf{B} = (\mathbf{B}_1, (py + q)\mathbf{b}) \in K[x,y]^{l \times l}$ with $\det(\B) \neq 0$, where $\mathbf{B}_1 \in K[x]^{l \times (l - 1)}$, $\mathbf{b} \in K[x]^{l \times 1}$, and $p, q \in K[x]$. If $I_1(\mathbf{B}) = K[x,y]$, then
 \[\mathbf{B} \sim
	\begin{pmatrix}
		1 & \mathbf{0}_{1\times (l-1)} \\
		\mathbf{0}_{(l-1)\times 1} & \M_1
	\end{pmatrix},\]
 where $\M_1 \in K[x,y]^{(l-1) \times (l-1)}$.
\end{lemma}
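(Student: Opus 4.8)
The plan is to transport $\B$ to the normal shape $\W$ of Lemma \ref{lemW} by a Smith-form computation over the principal ideal domain $K[x]$, and then to use the two coprimality conclusions of Lemma \ref{lemW} to manufacture a unit entry that can be split off. Concretely, since $\B_1\in K[x]^{l\times(l-1)}$ and $K[x]$ is a PID, there are $\U\in\GL_l(K[x])$ and $\V\in\GL_{l-1}(K[x])$ with $\U\B_1\V=\left(\begin{smallmatrix}\diag(f_1,\dots,f_{l-1})\\ \mathbf{0}_{1\times(l-1)}\end{smallmatrix}\right)$ and $f_1\mid f_2\mid\cdots\mid f_{l-1}$. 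Putting $\tilde\V=\diag(\V,1)\in\GL_l(K[x])$ and $(v_1,\dots,v_l)^{T}:=\U\mathbf{b}\in K[x]^{l\times 1}$, a direct multiplication gives $\U\B\tilde\V=\W$, precisely the matrix appearing in Lemma \ref{lemW}. As $\W$ is triangular with diagonal $f_1,\dots,f_{l-1},v_l(py+q)$ and $\det(\B)\neq0$, every $f_i$ is nonzero; and since $\U,\tilde\V$ are unimodular over $K[x]\subseteq K[x,y]$, Lemma \ref{lem0} yields $I_1(\W)=I_1(\B)=K[x,y]$, so Lemma \ref{lemW} supplies $\langle f_1,py+q\rangle=K[x,y]$ and $\gcd(f_1,v_1,\dots,v_l)=1$.

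Next I would collapse the last column into the first row. Using $\gcd(f_1,v_1,\dots,v_l)=1$ together with the argument behind Lemma \ref{main-lemma-1} --- extended from two polynomials to $v_1,\dots,v_l$ by choosing residues modulo each irreducible factor of $f_1$ via the Chinese Remainder Theorem --- I would pick $h_2,\dots,h_l\in K[x]$ so that $w:=v_1+\sum_{i=2}^{l}h_iv_i$ satisfies $\gcd(f_1,w)=1$. The row operation replacing row $1$ by row $1+\sum_{i\geq2}h_i\,(\text{row }i)$ turns the $(1,l)$ entry into $w(py+q)$, keeps column $1$ equal to $f_1$ times the first standard vector, and introduces entries $h_jf_j$ in positions $(1,j)$ for $2\leq j\leq l-1$. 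Because $f_1\mid f_j$, the column operations sending column $j$ to column $j-(h_jf_j/f_1)\cdot(\text{column }1)$ clear these entries without altering anything else, so row $1$ becomes $(f_1,0,\dots,0,w(py+q))$.

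Finally I would create the unit and peel it off. Since $\gcd(f_1,w)=1$ and $\langle f_1,py+q\rangle=K[x,y]$, multiplying the two Bézout relations shows $\langle f_1,w(py+q)\rangle=K[x,y]$, so there are $a,b\in K[x,y]$ with $af_1+bw(py+q)=1$. Applying the determinant-one column operation $\left(\begin{smallmatrix}a & -w(py+q)\\ b & f_1\end{smallmatrix}\right)$ to columns $1$ and $l$ makes the $(1,1)$ entry equal to $1$ and the $(1,l)$ entry equal to $0$, i.e. row $1$ becomes $(1,0,\dots,0)$; this introduces entries $bv_i(py+q)$ in column $1$, which the pivot then removes via row $i\mapsto$ row $i-bv_i(py+q)\cdot(\text{row }1)$. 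The resulting matrix is block-diagonal with leading entry $1$, which is exactly the asserted form with $\M_1\in K[x,y]^{(l-1)\times(l-1)}$.

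I expect the combination step to be the main obstacle: Lemma \ref{main-lemma-1} is stated only for two polynomials, so I must justify its $l$-fold analogue, and --- more delicately --- verify that realizing $w$ by the row operation and then undoing its side effects by column operations preserves the clean structure, which is exactly where the divisibility $f_1\mid f_j$ coming from the Smith form is used. The Smith reduction and the final unit-creation and clearing are then routine once the two coprimalities from Lemma \ref{lemW} are in hand.
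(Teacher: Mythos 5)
Your proposal is correct, and its skeleton (Smith reduction of $\B_1$ over the PID $K[x]$, passage to the matrix $\W$ of Lemma \ref{lemW}, combination of $v_1,\dots,v_l$ into a single $w$ with $\gcd(f_1,w)=1$, and creation of a unit in position $(1,1)$) matches the paper's proof. The two places where you diverge are both sound and, if anything, more elementary. First, for the combination step you do not actually need an $l$-fold analogue of Lemma \ref{main-lemma-1}: the paper sidesteps this by setting $d=\gcd(v_2,\dots,v_l)=c_2v_2+\cdots+c_lv_l$, applying the two-polynomial lemma to the triple $(f_1,v_1,d)$ to get $h$ with $\gcd(f_1,v_1+hd)=1$, and then taking $h_i=hc_i$ in exactly your row operation; your direct Chinese Remainder argument on $v_1,\dots,v_l$ also works, but the gcd reduction spares you from reproving the lemma. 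Second, and more substantively, the paper does \emph{not} clear the entries $h_if_i$ in positions $(1,j)$; it observes that the whole first row $(f_1,\,hc_2f_2,\dots,hc_{l-1}f_{l-1},\,(v_1+hd)(py+q))$ already generates $K[x,y]$ and invokes the Quillen--Suslin theorem to complete it to $(1,0,\dots,0)$. You instead exploit the divisibility $f_1\mid f_j$ to kill the middle entries by column operations (legitimate, since column $1$ is $f_1e_1$ and these operations touch only row $1$), reducing to a length-two unimodular row $(f_1,\,w(py+q))$, which you complete with an explicit determinant-one $2\times 2$ B\'ezout matrix; the final clearing of column $1$ by the pivot is routine. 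This buys a fully constructive, Quillen--Suslin-free proof of this lemma (though the theorem is still needed elsewhere in the paper, e.g.\ in Corollary \ref{main-corollary}), at the cost of the extra bookkeeping you correctly identified as the delicate point.
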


\begin{proof}
 Since $\det(\B) \neq 0$, $\B_1 \in K[x]^{l \times (l-1)}$ is a full column rank matrix. By the fact that $K[x]$ is a Euclidean domain, there exist $\P \in \mathrm{GL}_l(K[x])$ and $\Q \in \mathrm{GL}_{l-1}(K[x])$ such that
 \begin{equation*}\label{equ-lem8-1}
  \P\B_1\Q =
	\begin{pmatrix}
		f_1  &  0     &   \cdots  & 0       \\
		 0   & f_2    &   \ddots  & \vdots  \\
	  \vdots &\ddots  &   \ddots  & 0       \\
	  \vdots &        &  \ddots   & f_{l-1} \\
		0    & \cdots &  \cdots   &  0
	\end{pmatrix},
 \end{equation*}
 where $f_1,f_2, \ldots, f_{l-1} \in K[x] \setminus \{0\}$ satisfy $f_1\mid f_2 \mid \cdots \mid f_{l-1}$. Let $\W = \P\B\begin{pmatrix}\Q & \\ & 1 \end{pmatrix}$. Then
 \begin{equation*}\label{equ-lem8-2}
  \W= \begin{pmatrix}
		f_1 &       0     &   \cdots   &    0      & v_1(py+q)    \\
		 0  &      f_2    &    \ddots  &  \vdots   & v_2(py+q)    \\
     \vdots &    \ddots   &    \ddots  &    0      & \vdots       \\
	 \vdots &             &   \ddots   &  f_{l-1}  & v_{l-1}(py+q)\\
		0   &   \cdots    &  \cdots    &     0     & v_{l}(py+q)
 \end{pmatrix},
 \end{equation*}
 where $(v_1, \ldots, v_l)^{\rm T} = \P \cdot \mathbf{b}$. By Lemma \ref{lem0}, we have $I_1(\W) = I_1(\B) = K[x,y]$. According to Lemma \ref{lemW}, we obtain
 \begin{equation}\label{equ-lem8-3}
  \langle f_1, py+q \rangle = K[x,y] ~ \text{and} ~ \gcd(f_1,v_1,\ldots,v_l) = 1.
 \end{equation}
 Let $d = \gcd(v_2,\ldots,v_l)$. Then there exist $c_2,\ldots,c_n\in K[x]$ such that
 \begin{equation*}\label{equ-lem8-4}
  d = c_2v_2+\cdots+c_lv_l.
 \end{equation*}
 It follows from $\gcd(f_1,v_1,v_2,\ldots,v_n) = 1$ that $\gcd(f_1,v_1,d) = 1$. By Lemma \ref{main-lemma-1}, there exists $h\in K[x]$ such that $\gcd(f_1,v_1+hd) = 1$. As $f_1, v_1+hd \in K[x] \subset K[x,y]$, we have
 \begin{equation}\label{equ-lem8-5}
  \langle f_1,v_1+hd \rangle = K[x,y].
 \end{equation}
 Let
 \[\U = \begin{pmatrix}
		1 & hc_2 & \cdots &hc_{l-1} &  hc_l \\
		& 1   &        &  & &   \\
		&     & \ddots &  & &   \\
		&     &        & 1 & &   \\
		&     &        &   &1
	\end{pmatrix}
 ~ \text{and} ~~ \W' = \U\W.\]
 Then
 \begin{equation*}\label{equ-lem8-6}
  \W' =\begin{pmatrix}
		f_1 & hc_2 f_2 & \cdots & hc_{l-1} f_{l-1} & (v_1+hd)(py+q) \\
		& f_2     &        &                   & v_2(py + q) \\
		&            & \ddots &                   & \vdots \\
		&            &        & f_{l-1}        & v_{l-1}(py + q) \\
		&            &        &                   & v_l(py + q)
	\end{pmatrix}.
 \end{equation*}
 Combining Equations \eqref{equ-lem8-3} and \eqref{equ-lem8-5}, we deduce that
 \begin{equation*}\label{equ-lem8-7}
  \langle f_1, (v_1 + hd)(py + q) \rangle = K[x, y].
 \end{equation*}
 It follows that the first row of $\W'$ is a ZLP vector. By the Quillen-Suslin Theorem (Lemma \ref{QS}), there exists $\V \in \GL_l(K[x,y])$ such that
 \begin{equation*}\label{equ-lem8-8}
  \W'\V = 	\begin{pmatrix}
    	1 & \mathbf{0}_{1\times (l-1)} \\
    	\mathbf{b}_1 & \M_1
    \end{pmatrix},
 \end{equation*}
 where $\mathbf{b}_1\in K[x,y]^{(l-1) \times 1}$ and $\M_1\in K[x,y]^{(l-1)\times (l-1)}$. Furthermore, by performing elementary row operations on $\W'\V$, we get
 \begin{equation*}\label{equ-lem8-9}
  \W'\V \sim \begin{pmatrix}
    	1 & \mathbf{0}_{1\times (l-1)} \\
    	\mathbf{0}_{(l-1)\times 1} & \M_1
    \end{pmatrix}.
 \end{equation*}
 This implies that
 \begin{equation*}\label{equ-lem8-10}
  \mathbf{B} \sim
	\begin{pmatrix} 1 &  \\  & \M_1
	\end{pmatrix}.
 \end{equation*}
 The proof is completed.
\end{proof}

\begin{lemma}\label{lem9}
 Let $\M = \begin{pmatrix}  1 &  \\  & \M_1 \end{pmatrix}\in K[x,y]^{l\times l}$, where $\M_1 \in K[x,y]^{(l-1)\times (l-1)}$. Then $J_i(\M) = J_{i-1}(\M_1)$ for $i = 2,\ldots,l$.
\end{lemma}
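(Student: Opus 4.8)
The plan is to express the $i\times i$ minors of $\M = \diag(1,\M_1)$ in terms of the minors of $\M_1$, and then transfer this through the definitions of $d_i$ and $J_i$. First I would fix an order $i$ with $2\le i \le l$ and sort the $i\times i$ minors of $\M$ according to whether the chosen row index set and column index set contain the index $1$. Because the first row and first column of $\M$ are $(1,0,\ldots,0)$ and $(1,0,\ldots,0)^{\mathrm{T}}$, a minor that selects row $1$ but not column $1$ (or column $1$ but not row $1$) has an all-zero row (resp. column) and hence vanishes; a minor selecting both row $1$ and column $1$ equals, after a cofactor expansion along the leading row, an $(i-1)\times(i-1)$ minor of $\M_1$; and a minor avoiding index $1$ altogether is an $i\times i$ minor of $\M_1$. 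Consequently the nonzero $i\times i$ minors of $\M$ are exactly the $(i-1)\times(i-1)$ minors of $\M_1$ together with the $i\times i$ minors of $\M_1$.

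From this I would read off the determinantal divisor. Since the gcd of a union of two families equals the gcd of their two gcds, $d_i(\M) = \gcd\bigl(d_{i-1}(\M_1),\, d_i(\M_1)\bigr)$. Invoking the standard divisibility $d_{i-1}(\M_1)\mid d_i(\M_1)$ --- which follows from a cofactor expansion showing that every $i\times i$ minor of $\M_1$ is a $K[x,y]$-combination of its $(i-1)\times(i-1)$ minors, so that $d_{i-1}(\M_1)$ divides every $i\times i$ minor of $\M_1$ --- yields $d_i(\M) = d_{i-1}(\M_1)$.

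Finally I would compare the reduced minors. By definition the order-$i$ reduced minors of $\M$ are its nonzero $i\times i$ minors divided by $d_i(\M) = d_{i-1}(\M_1)$. Dividing the $(i-1)\times(i-1)$ minors of $\M_1$ by $d_{i-1}(\M_1)$ reproduces precisely the order-$(i-1)$ reduced minors of $\M_1$, so these generate $J_{i-1}(\M_1)$ and give $J_{i-1}(\M_1)\subseteq J_i(\M)$. The remaining generators are the $i\times i$ minors of $\M_1$ divided by $d_{i-1}(\M_1)$, and the crux of the argument is to show each of these already lies in $J_{i-1}(\M_1)$. I would handle this by the same cofactor expansion: writing an $i\times i$ minor $m$ of $\M_1$ as a $K[x,y]$-combination of its $(i-1)\times(i-1)$ minors and dividing through by $d_{i-1}(\M_1)$ expresses $m/d_{i-1}(\M_1)$ as that same combination of the order-$(i-1)$ reduced minors of $\M_1$, hence as an element of $J_{i-1}(\M_1)$. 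Thus these extra generators contribute nothing new, which gives $J_i(\M)\subseteq J_{i-1}(\M_1)$ and, combined with the reverse inclusion, the desired equality $J_i(\M) = J_{i-1}(\M_1)$. I expect the main obstacle to be this last containment, namely keeping the divisions by $d_{i-1}(\M_1)$ consistent so that the cofactor identity for the full minors descends cleanly to the reduced minors.
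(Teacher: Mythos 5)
Your proof is correct and follows essentially the same route as the paper's: both classify the $i\times i$ minors of $\begin{pmatrix}1 & \\ & \M_1\end{pmatrix}$ as the $(i-1)\times(i-1)$ minors and the $i\times i$ minors of $\M_1$ (the mixed ones vanishing), use the cofactor-expansion fact that $i\times i$ minors lie in the ideal of $(i-1)\times(i-1)$ minors to get $I_i(\M)=I_{i-1}(\M_1)$ and hence $d_i(\M)=d_{i-1}(\M_1)$, and then pass to reduced minors by dividing out this common $\gcd$. The paper compresses the final step into cancelling $d_i(\M)=d_{i-1}(\M_1)$ from the ideal identity $I_i(\M)=d_i(\M)\cdot J_i(\M)$, which is precisely what your generator-by-generator verification of the two inclusions establishes.
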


\begin{proof}
 For any given integer $i$ with $2 \leq i \leq l$, it follows from the special form of $\M$ that any $i\times i$ minor of $\M$ is either an $i\times i$ minor of $\M_1$ or arises from an $(i-1)\times (i-1)$ minor of $\M_1$. This implies that
 \begin{equation*}\label{equ-lem9-1}
  I_i(\M) = I_i(\M_1) + I_{i-1}(\M_1).
 \end{equation*}
 Since $I_{i}(\M_1) \subseteq I_{i-1}(\M_1)$, we have
 \begin{equation}\label{equ-lem9-2}
  I_i(\M) = I_{i-1}(\M_1).
 \end{equation}
 It follows from Equation \eqref{equ-lem9-2} that $d_i(\M) = d_{i-1}(\M_1)$. By the fact that
 \begin{equation*}\label{equ-lem9-4}
  I_i(\M) = d_i(\M)\cdot J_i(\M) ~ \text{and} ~ I_{i-1}(\M_1) = d_{i-1}(\M_1) \cdot J_{i-1}(\M_1),
 \end{equation*}
 we obtain $J_i(\M) = J_{i-1}(\M_1)$. The proof is completed.
\end{proof}

\begin{theorem}\label{main-Theorem}
 Let $\M \in K[x,y]^{l\times l}$ with $\det(\M) = fy+g$, where $f,g\in K[x]$ and $f \neq 0$. Then $\M$ is equivalent to its Smith normal form if and only if $J_i(\M) = K[x,y]$ for $i = 1,\ldots,l$.
\end{theorem}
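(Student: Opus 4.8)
The plan is to establish necessity directly and sufficiency by induction on $l$, with the induction step assembling Lemmas \ref{lem7}, \ref{lem8} and \ref{lem9}. For necessity, suppose $\M \sim \S_{\M}$, where $\S_{\M} = \diag(f_1,\ldots,f_l)$ is the Smith normal form of $\M$ and $f_1 \mid \cdots \mid f_l$ by the remark following Definition \ref{def}. By Lemma \ref{lem0} it suffices to check $J_i(\S_{\M}) = K[x,y]$. Since $f_1 \mid \cdots \mid f_l$, the minor $f_1\cdots f_i$ divides every $i\times i$ minor of $\S_{\M}$, so $d_i(\S_{\M}) = f_1\cdots f_i$ and the reduced minor obtained from $f_1\cdots f_i$ equals $1$; hence $J_i(\S_{\M}) = K[x,y]$ and therefore $J_i(\M) = K[x,y]$.

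For sufficiency I would argue by induction on $l$, the case $l=1$ being immediate since a $1\times 1$ matrix coincides with its Smith normal form. For the inductive step the first (and least obvious) move is to clear the content of $\M$. Writing $d = d_1(\M)$, every term in the Leibniz expansion of $\det(\M)$ is a product of $l$ entries each divisible by $d$, so $d^l \mid fy+g$; as $\deg_y(fy+g)=1$ and $l \geq 2$, this forces $d \in K[x]$ and, since $d^l\in K[x]$ divides $fy+g$, it divides both $f$ and $g$. Factoring $\M = d\,\M'$ then yields a matrix with $I_1(\M') = K[x,y]$, with $\det(\M') = (f/d^l)y + (g/d^l)$ still of the required shape, and with $J_i(\M') = J_i(\M) = K[x,y]$ for all $i$ (the factor $d^i$ cancels between $I_i$ and $d_i$). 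Because $\S_{\M} = d\,\S_{\M'}$, it suffices to prove $\M' \sim \S_{\M'}$, so I may assume from the outset that $I_1(\M) = K[x,y]$.

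With $I_1(\M) = K[x,y]$ in hand, Lemma \ref{lem7} gives $\M \sim \B = (\B_1,(py+q)\mathbf{b})$; Lemma \ref{lem0} propagates $I_1(\B) = I_1(\M) = K[x,y]$, so Lemma \ref{lem8} applies and yields $\M \sim \diag(1,\M_1)$ with $\M_1 \in K[x,y]^{(l-1)\times(l-1)}$. Lemma \ref{lem9} then transfers the hypotheses down one dimension, giving $J_j(\M_1) = J_{j+1}(\diag(1,\M_1)) = J_{j+1}(\M) = K[x,y]$ for $j=1,\ldots,l-1$, while $\det(\M_1)$ equals $\det(\M)$ up to a unit and so retains the form $f''y+g''$ with $f'' \neq 0$. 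The induction hypothesis then gives $\M_1 \sim \S_{\M_1}$, whence $\M \sim \diag(1,\S_{\M_1})$; this last matrix is diagonal with a divisibility chain $1 \mid g_1 \mid \cdots \mid g_{l-1}$, hence is itself a Smith normal form, and by uniqueness it equals $\S_{\M}$.

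The main obstacle is precisely the content-clearing step: the condition $J_i(\M)=K[x,y]$ for all $i$ does not by itself imply $I_1(\M)=K[x,y]$ (for instance $\diag(x,xy)$ satisfies $J_i=K[x,y]$ for all $i$ yet has $I_1 = \langle x\rangle$), so Lemma \ref{lem8} cannot be invoked on the output of Lemma \ref{lem7} as it stands. The delicate part will be recognizing that the content $d_1(\M)$ must lie in $K[x]$ because $\det(\M)$ has $y$-degree $1$, and then verifying that dividing it out preserves every hypothesis as well as the shape of the determinant; once this reduction is in place, the remaining argument is a routine descent through Lemmas \ref{lem7}--\ref{lem9}.
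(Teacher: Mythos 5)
Your proposal is correct and follows essentially the same route as the paper's proof: necessity via Lemma \ref{lem0}, then sufficiency by induction on $l$, first extracting the content $d_1(\M)$ (observing $d_1(\M)^l$ divides $\gcd(f,g)$ so the reduced matrix still has determinant of the required shape), and then descending one dimension via Lemmas \ref{lem7}, \ref{lem8} and \ref{lem9}. The obstacle you flag --- that $J_i(\M)=K[x,y]$ does not give $I_1(\M)=K[x,y]$ without first dividing out the content --- is exactly the point the paper's proof handles by the factorization $\M = d_1(\M)\cdot\M_0$.
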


\begin{proof}
 The necessity is obvious from Lemma \ref{lem0}. It suffices to prove the sufficiency. We proceed by induction on $l$. The statement is clearly true when $l = 1$. Assume that the statement holds for all $l < k$.

 For $l = k$, we extract $d_1(\M)$ from $\M$ and obtain
 \begin{equation}\label{equ-maintheorem-1}
  \M = d_1(\M) \cdot \M_0.
 \end{equation}
 Obviously, $(d_1(\M))^k \mid \gcd(f,g)$, $d_1(\M_0) = 1$ and $\det(\M_0) \neq 0$. Moreover, it follows from $J_i(\M) = K[x,y]$ and Equation \eqref{equ-maintheorem-1} that $J_i(\M_0) = K[x,y]$, where $i=1,\ldots,k$. By Lemma \ref{lem7}, we have
 \begin{equation}\label{equ-maintheorem-2}
  \M_0 \sim (\B_1, (py + q)\mathbf{b}),
 \end{equation}
 where $\B_1 \in K[x,y]^{k \times (k - 1)}$, $\mathbf{b} \in K[x,y]^{k\times 1}$, and $p, q \in K[x]$ satisfy $p \neq 0$ and $\gcd(p,q)=1$. Let $\B = (\B_1, (py + q)\mathbf{b})$.  Since $\M_0 \sim \B$, we have
 \begin{equation*}\label{equ-maintheorem-3}
  \det(\B) \neq 0, ~ d_1(\B) = d_1(\M_0) = 1 ~ \text{and} ~ J_i(\B) =  J_i(\M_0) = K[x,y],
 \end{equation*}
 where $i=1,\ldots,k$. This implies that $I_1(\B) = J_1(\B) = K[x,y]$. According to Lemma \ref{lem8}, we obtain
 \begin{equation}\label{equ-maintheorem-4}
  \B \sim \begin{pmatrix} 1 & \\  & \M_1 \end{pmatrix},
 \end{equation}
 where $\M_1 \in K[x,y]^{(k-1)\times (k-1)}$. Let $\M' =  \begin{pmatrix} 1 & \\  & \M_1 \end{pmatrix}$. It follows from Lemma \ref{lem9} that
 \begin{equation*}\label{equ-maintheorem-5}
  J_i(\M_1) = J_{i+1}(\M') = J_{i+1}(\B) = K[x,y],
 \end{equation*}
 where $i = 1,\ldots,k-1$. Since $\det(\M) = (d_1(\M))^k \cdot \det(\M_0)$ and $\M_0 \sim \M'$, we have
 \begin{equation*}\label{equ-maintheorem-6}
  \det(\M_1) = \frac{1}{(d_1(\M))^k} \cdot (fy+g).
 \end{equation*}
 By the induction hypothesis, $\M_1$ is equivalent to its Smith normal form. That is,
 \begin{equation}\label{equ-maintheorem-7}
  \M_1 \sim \diag(\Psi_1, \ldots, \Psi_{k-1}),
 \end{equation}
 where $\Psi_1, \ldots, \Psi_{k-1}\in K[x,y]$ satisfy $\Psi_1\mid \Psi_2 \mid \cdots \mid \Psi_{k-1}$. Combining Equations \eqref{equ-maintheorem-1}--\eqref{equ-maintheorem-7}, we obtain
 \begin{equation*}\label{equ-maintheorem-8}
  \M \sim \diag(d_1(\M), \, d_1(\M)\cdot \Psi_1, \ldots, d_1(\M)\cdot \Psi_{k-1}).
 \end{equation*}
 Let $\Phi_1 = d_1(\M)$ and $\Phi_i = d_1(\M)\cdot \Psi_{i-1}$ for $i = 2, \ldots, k$. Then
 \begin{equation*}\label{equ-maintheorem-9}
  \M \sim \diag(\Phi_1, \Phi_2, \ldots, \Phi_k),
 \end{equation*}
 where $\Phi_i \mid \Phi_{i+1}$ for $i = 1, \ldots, k - 1$. It follows that $\M$ is equivalent to its Smith normal form.

 The proof is completed.
\end{proof}

 Theorem \ref{main-Theorem} provides an affirmative solution to Problem \ref{intro_problem}.

\begin{corollary}\label{main-corollary}
 Let $\M\in K[x,y]^{l\times m}$ with rank $r$, and $d_r(\M)= fy+g$, where $1\leq r \leq l$ and $f,g\in K[x]$ and $f \neq 0$. Then $\M$ is equivalent to its Smith normal form if and only if $J_i(\M) = K[x,y]$ for $i=1,\ldots,r$.
\end{corollary}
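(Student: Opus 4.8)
The plan is to reduce the rank-deficient, non-square situation to the square full-rank case already settled in Theorem \ref{main-Theorem}, peeling off the superfluous columns and then the superfluous rows by a Lin--Bose factorization followed by the Quillen--Suslin theorem. Necessity is the easy direction: the Smith normal form $\S$ of $\M$ is $\diag(f_1,\ldots,f_r)$ bordered by zero blocks with $f_1\mid\cdots\mid f_r$, so for each $i\le r$ the minor $f_1\cdots f_i$ taken from the leading block divides every other $i\times i$ minor and hence equals $d_i(\S)$; the associated reduced minor is then $1$, giving $J_i(\S)=K[x,y]$. Since $J_i$ is an equivalence invariant (Lemma \ref{lem0}), $J_i(\M)=K[x,y]$ for $i=1,\ldots,r$.

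For sufficiency, assume $J_i(\M)=K[x,y]$ for all $i\le r$; note that $f\neq 0$ forces $d_r(\M)\neq 0$, so $\rank(\M)=r$ is genuine. First I would strip the columns. Because $J_r(\M)=K[x,y]$, Theorem \ref{lin-bose} supplies $\M=\G\F$ with $\G\in K[x,y]^{l\times r}$, $d_r(\G)=d_r(\M)$, and $\F\in K[x,y]^{r\times m}$ a ZLP matrix; the Quillen--Suslin theorem (Theorem \ref{QS}) then gives $\V\in\GL_m(K[x,y])$ with $\F\V=(\mathbf{I}_r,\mathbf{0}_{r\times(m-r)})$, so that $\M\V=(\G,\mathbf{0}_{l\times(m-r)})$. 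Appending zero columns creates no new nonzero minor, whence $\G$ has full column rank $r$, $d_r(\G)=fy+g$, and $J_i(\G)=J_i(\M)=K[x,y]$ for every $i$.

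Next I would strip the rows by running the same argument on $\G^{\mathrm{T}}$, which is $r\times l$ of full row rank with $J_r(\G^{\mathrm{T}})=J_r(\G)=K[x,y]$. Theorem \ref{lin-bose} factors $\G^{\mathrm{T}}=\mathbf{H}\F'$ with $\mathbf{H}\in K[x,y]^{r\times r}$, $d_r(\mathbf{H})=d_r(\G)$, and $\F'$ ZLP; transposing yields $\G=(\F')^{\mathrm{T}}\mathbf{H}^{\mathrm{T}}$ with $(\F')^{\mathrm{T}}$ zero right prime, and the ZRP form of Theorem \ref{QS} produces $\U\in\GL_l(K[x,y])$ with $\U(\F')^{\mathrm{T}}=\left(\begin{smallmatrix}\mathbf{I}_r\\ \mathbf{0}\end{smallmatrix}\right)$. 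Setting $\N:=\mathbf{H}^{\mathrm{T}}\in K[x,y]^{r\times r}$ gives $\U\G=\left(\begin{smallmatrix}\N\\ \mathbf{0}\end{smallmatrix}\right)$, so combining both reductions $\M\sim\left(\begin{smallmatrix}\N&\mathbf{0}\\ \mathbf{0}&\mathbf{0}\end{smallmatrix}\right)$. The bordering zeros are again invisible to the minors, so $\det(\N)=d_r(\N)=fy+g$ (up to a unit) and $J_i(\N)=J_i(\M)=K[x,y]$ for $i=1,\ldots,r$.

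At this point $\N$ is square with $\det(\N)$ of degree at most one in $y$, so Theorem \ref{main-Theorem} applies and $\N\sim\diag(f_1,\ldots,f_r)$ with $f_i=d_i(\N)/d_{i-1}(\N)=d_i(\M)/d_{i-1}(\M)$; bordering with the zero blocks reproduces exactly the Smith normal form of $\M$ in Definition \ref{def}. I expect the main obstacle to be bookkeeping rather than new ideas: one must check at each step that deleting or adjoining zero rows and columns preserves $d_r$, the determinant, and every $J_i$, and that the transpose (zero-right-prime) version of Quillen--Suslin is legitimate. All the genuinely hard work has been pushed into the square case of Theorem \ref{main-Theorem}.
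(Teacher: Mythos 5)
Your proposal is correct and follows essentially the same route as the paper: a Lin--Bose factorization plus Quillen--Suslin to strip the extra columns, the dual (ZRP/transposed) version to strip the extra rows, and then Theorem \ref{main-Theorem} applied to the resulting $r\times r$ block. The only cosmetic difference is that you pass to $\G^{\mathrm{T}}$ explicitly to invoke the ZLP statement, whereas the paper invokes the ZRP form of Lemma \ref{lin-bose} directly; the content is identical.
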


\begin{proof}
 The necessity is obvious from Lemma \ref{lem0}. It suffices to prove the sufficiency. Since $J_r(\M) = K[x,y]$, by Lemma \ref{lin-bose} there exist $\G_1\in K[x,y]^{l\times r}$ and $\F_1\in K[x,y]^{r\times m}$ such that
 \begin{equation}\label{equ-coro-1}
  \M = \G_1\F_1,
 \end{equation}
 where $\F_1$ is a ZLP matrix. By the Quillen-Suslin Theorem, there exists $\V\in \GL_m(K[x,y])$ such that
 \begin{equation}\label{equ-coro-2}
  \F_1\V = (\mathbf{I}_r, \mathbf{0}_{r\times(m-r)}).
 \end{equation}
 Combining Equations \eqref{equ-coro-1} and \eqref{equ-coro-2}, we have
 \begin{equation}\label{equ-coro-3}
  \M\V = (\G_1, \mathbf{0}_{l\times(m-r)}).
 \end{equation}
 By the fact that $\V$ is unimodular, it follows from Equation \eqref{equ-coro-3} and Lemma \ref{lem0} that $J_r(\G_1) = J_r(\M) = K[x,y]$. Using Lemma \ref{lin-bose} again, there exist $\G_2\in K[x,y]^{l\times r}$ and $\G_3\in K[x,y]^{r\times r}$ such that
 \begin{equation}\label{equ-coro-4}
  \G_1 = \G_2\G_3,
 \end{equation}
 where $\G_2$ is a ZRP matrix. According to the Quillen-Suslin Theorem, there exists $\U\in \GL_l(K[x,y])$ such that
 \begin{equation}\label{equ-coro-5}
  \U \G_2 = \begin{pmatrix} \mathbf{I}_r \\ \mathbf{0}_{(l-r)\times r}\end{pmatrix}.
 \end{equation}
 Combining Equations \eqref{equ-coro-3}--\eqref{equ-coro-5}, we obtain
 \begin{equation*}\label{equ-coro-6}
  \U \M \V = \begin{pmatrix} \G_3 & \mathbf{0}_{r\times (m-r)} \\ \mathbf{0}_{(l-r)\times r} & \mathbf{0}_{(l-r)\times (m-r)}\end{pmatrix}.
 \end{equation*}
 Therefore, by Lemma \ref{lem0} we get
 \[d_i(\G_3) = d_i(\M) ~ \text{and} ~ J_i(\G_3) = J_i(\M),\]
 where $i=1,\ldots,r$. According to Theorem \ref{main-Theorem}, there exist $\U_1,\V_1\in \GL_r(K[x,y])$ such that
 \begin{equation*}\label{equ-coro-7}
  \U_1 \G_3 \V_1 = \diag(\omega_1,\ldots,\omega_r),
 \end{equation*}
 where $\omega_i = \frac{d_i(\M)}{d_{i-1}(\M)}$ for $i=1,\ldots,r$. Let
 \[\U_2 = \begin{pmatrix} \U_1 &   \\   &  \mathbf{I}_{l-r} \end{pmatrix} ~ \text{and} ~
   \V_2 =  \begin{pmatrix} \V_1 &   \\   &  \mathbf{I}_{m-r} \end{pmatrix}.\]
 Then $\U_2,\V_2$ are unimodular matrices. Moreover, we have
 \begin{equation}\label{equ-coro-8}
  \U_2\U \M \V \V_2 = \begin{pmatrix} \diag(\omega_1,\ldots,\omega_r) & \mathbf{0}_{r\times (m-r)} \\ \mathbf{0}_{(l-r)\times r} & \mathbf{0}_{(l-r)\times (m-r)}\end{pmatrix}.
 \end{equation}
 It follows from Equation \eqref{equ-coro-8} that $\M$ is equivalent to its Smith normal form.

 The proof is completed.
\end{proof}

 Now we use an example to illustrate the effectiveness of Theorem \ref{main-Theorem}.

\begin{example}\label{equivalence-example}
 Let
 \[\M = \begin{pmatrix}
       x^2y^2-x^2y-xy^2+2x+y-1 & x^3y^2-x^3y-x^2y^2-x^2y+x^2+2xy-1 \\
        xy^2-xy-y^2+y+1 & x^2y^2-x^2y-xy^2+y
        \end{pmatrix}\]
 be a bivariate polynomial matrix in $\mathbb{Q}[x,y]^{2\times 2}$, where $\mathbb{Q}$ is the field of rational numbers, $y>x$ and $\prec$ is the lexicographic order.

 By calculation, we have $\det(\M) = (x-1)(xy-x-1)$. According to the definition of $I_1(\M)$, we have
 \[I_1(\M) = \langle \M[1,1],\M[1,2],\M[2,1],\M[2,2]\rangle,\]
 where $\M[i,j]$ is the $(i,j)$-th entry of $\M$, $1 \leq i,j \leq 2$. We compute a reduced Gr\"{o}bner basis \cite{Buchberger1995GB} of $I_1(\M)$ with respect to $\prec$ and obtain $\{1\}$. This implies that $J_1(\M) = I_1(\M) = \mathbb{Q}[x,y]$ and $d_1(\M) = 1$. Obviously, $J_2(\M) = \mathbb{Q}[x,y]$. Then we can apply Theorem \ref{main-Theorem} to reduce $\M$ to its equivalent Smith normal form.

 We use the Primitive Factorization Theorem to compute $\G,\F\in \mathbb{Q}[x,y]^{2\times 2}$ such that
 \begin{equation}\label{equ-exam-1}
  \M = \G\F ~ \text{and} ~ \det(\G) = x-1.
 \end{equation}
 According to the primitive factorization algorithm proposed in \cite{Guiver1982Poly}, we have
 \begin{equation*}\label{equ-exam-2}
  \G = \begin{pmatrix} 1 & 0 \\ 1 & x-1  \end{pmatrix} ~ \text{and} ~
  \F = \begin{pmatrix} \M[1,1] &  \M[1,2] \\  -(xy^2-xy-y^2+2) & -(xy-y-1)(xy-x-1) \end{pmatrix},
 \end{equation*}
 where $\det(\F) = xy-x-1$.  Set
 \[\P_1 = \begin{pmatrix} 1 & 0 \\ 1 & 1  \end{pmatrix} ~ \text{and} ~ \S_1 = \begin{pmatrix} 1 &   \\   & x-1 \end{pmatrix}.\]
 It is easy to see that
 \begin{equation}\label{equ-exam-3}
  \G = \P_1\S_1.
 \end{equation}
 Let $\F_1 = \F(x,\frac{x+1}{x})$. Then
 \[ \F_1 = \begin{pmatrix} 2x-1 & 0 \\ \frac{-2x^2+x+1}{x^2} & 0  \end{pmatrix}.\]
 Set $\F_2 = x^2 \cdot \F_1$. There exists $\P_2\in \GL_2(\mathbb{Q}[x])$ such that
 \begin{equation*}\label{equ-exam-4}
  \P_2 \F_2 = \begin{pmatrix} 1 & 0 \\ 0 & 0  \end{pmatrix},
 \end{equation*}
 where
 \[\P_2 = \begin{pmatrix} 2x-1 & 2x^2-x+1 \\ 2x^2-x-1 & 2x^3-x^2  \end{pmatrix}.\]
 Let
 \begin{equation}\label{equ-exam-5}
  \F_3 = \P_2\F.
 \end{equation}
 By Lemma \ref{lem1}, we obtain
 \begin{equation}\label{equ-exam-6}
  \F_3  = \S_2 \F_4,
 \end{equation}
 where $\S_2 = \begin{pmatrix} 1 &  \\  & xy-x-1  \end{pmatrix}$ and
 \[\F_4 = \begin{pmatrix} -xy^2+3xy-2x+y^2-y-1 & -x^2y^2+3x^2y-2x^2+xy^2-2x-y \\  -xy+2x+y-1 & -x^2y+2x^2+xy-1 \end{pmatrix}.\]
 Combining Equations \eqref{equ-exam-1}--\eqref{equ-exam-6}, we have
 \begin{equation*}\label{equ-exam-7}
  \M  = \P_1 \S_1 \P_2^{-1} \S_2 \F_4.
 \end{equation*}
 Let $\B =  \S_1 \P_2^{-1} \S_2$. Since $\P_1,\F_4$ are two unimodular matrices, we get
 \begin{equation}\label{equ-exam-8}
  \M  \sim \B.
 \end{equation}
 By calculation, we have
 \[\B =  \begin{pmatrix} 2x^3-x^2 & -(2x^2-x+1)(xy-x-1) \\ -2x^3+3x^2-1 & (2x^2-3x+1)(xy-x-1) \end{pmatrix}.\]
 Let $\P_3 = \begin{pmatrix} -4x^2+4x+1 & -4x^2-1 \\ 2x^3-3x^2+1 & 2x^3-x^2 \end{pmatrix}$. It is easy to compute that $\det(\P_3) = 1$. This implies that $\P_3\in \GL_2(\mathbb{Q}[x])$. Let $\F_5 = \P_3\B$. Then
 \[\F_5 =  \begin{pmatrix} 1 & -2xy+2x+2 \\ 0 & (x-1)(xy-x-1) \end{pmatrix}.\]
 Set $\P_4 = \begin{pmatrix} 1 & 2xy-2x-2 \\ 0 & 1 \end{pmatrix}$. It is easy to check that
 \begin{equation*}\label{equ-exam-9}
  \S = \F_5 \P_4 = \begin{pmatrix} 1 &  \\  & (x-1)(xy-x-1)  \end{pmatrix}.
 \end{equation*}
 Combining the above equations, we have
 \begin{equation}\label{equ-exam-10}
  \P_3\P_1^{-1}\M \F_4^{-1} \P_4 = \S.
 \end{equation}
 Let $\U = \P_3\P_1^{-1}$ and $\V = \F_4^{-1} \P_4$. Then
 \[\U = \begin{pmatrix} 4x+2 & -4x^2-1 \\ -2x^2+1 & 2x^3-x^2 \end{pmatrix}\]
 and
 \[\V = \begin{pmatrix} -x^2y+2x^2+xy-1 & -(x-1)(2xy-4x-y-2)(xy-x-1) \\ xy-2x-y+1 & 2x^2y^2-6x^2y+4x^2-3xy^2+5xy+y^2+y-3 \end{pmatrix}.\]
 It follows from Equation \eqref{equ-exam-10} that $\U\M\V = \S$.
\end{example}

\section{Concluding remarks}\label{sec5}

 In this paper, we have investigated the equivalence problem between bivariate polynomial matrices and their Smith normal forms. Let $\M\in K[x,y]^{l\times m}$ with rank $r$, where $1 \leq r \leq l$. Corollary \ref{main-corollary} shows that, for the case of $d_r(\M) = fy+g$ with $f,g\in K[x]$ and $f \neq 0$, Frost and Storey's assertion holds. Consequently, we have completely solved the equivalence problem between a class of bivariate polynomial matrices and their Smith normal forms. We hope that the methods proposed herein will facilitate research on the equivalence problem between multivariate polynomial matrices (with more than $2$ variables) and their Smith normal forms.

\section*{Acknowledgments}

 This research was supported by the Key Project of the National Natural Science Foundation of China under Grant No. 12494550\&12494551, the National Natural Science Foundation of China under Grant Nos. 12171469 and 12201210, the Sichuan Science and Technology Program under Grant No. 2024NSFSC0418, the Fundamental Research Funds for the Central Universities under Grant No. 2682024ZTPY052, and the STU Scientific Research Initiation under Grant No. NTF24023T.

\bibliographystyle{elsarticle-num-names}
\bibliography{Biv_ref}

\end{document}